\documentclass[runningheads]{llncs}
\usepackage[T1]{fontenc}
\usepackage{graphicx}
\usepackage{amsmath}
\usepackage{amssymb}
\usepackage{microtype}
\usepackage{xcolor}
\newcommand{\rau}{\textsc{All-Distinct}}

\DeclareMathOperator*{\polylog}{polylog}
\newcommand{\tru}{\ensuremath{\textsc{true}}}

\begin{document}
\title{Average-Case Optimal Encodings and Efficient Worst-Case Indices for Element Distinctness Queries}
\titlerunning{Average-Case Optimal Encodings for Element Distinctness Queries}
% If the paper title is too long for the running head, you can set
% an abbreviated paper title here
%
\author{Philip Bille\inst{1}\orcidID{0000-0002-1120-5154} \and
Johannes Fischer\inst{2}\orcidID{0000-0002-3384-597X} \and
Inge Li G{\o}rtz\inst{1}\orcidID{0000-0002-8322-4952}
\and
Filippo Lari\inst{3}\orcidID{0009-0000-6817-6561}}
\authorrunning{P. Bille et al.}
% First names are abbreviated in the running head.
% If there are more than two authors, 'et al.' is used.
%
\institute{Department of Applied Mathematics and Computer Science (DTU Compute),
Technical University of Denmark, Lyngby, Denmark\\
\email{\{phbi, inge\}@dtu.dk}
\and
Department of Computer Science, Technical University of Dortmund, Germany\\
\email{johannes.fischer@cs.tu-dortmund.de}
\and
Department of Computer Science, University of Pisa, Italy\\
\email{filippo.lari@phd.unipi.it}}
\maketitle              % typeset the header of the contribution

\begin{abstract} 
We study the data structure version of the \emph{element distinctness problem}: preprocess an array of $n$ elements from an alphabet of size $\sigma$ to answer \textsc{All-Distinct} queries, asking whether a given range contains only distinct elements. We first focus on \emph{uniformly random arrays}: in the encoding model, where access to the input at query time is not allowed, we prove a lower bound on the expected space; for instance, the lower bound is $n$, $1.3627n$, $1.5153n$, $1.5824n$ bits for $\sigma = 2,3,4,5$, and approximately $n\sqrt{\pi/(2\sigma)}\,\log\sigma$ bits for $\sigma =\omega(1)$. We complement this by designing different average-case optimal encodings, supporting \textsc{All-Distinct} queries in worst-case time $O(1)$, $o(\log^{2}{\log{n}})$, or $O(\log\log{n})$ depending on $\sigma$, and $O(1)$ expected time for any $\sigma = \omega(1)$. We then switch to worst-case (non-random) arrays: in the indexing model, where access to the input is allowed, we prove a cell-probe space-time tradeoff lower bound showing that any index using $n/b$ bits must have $\Omega(b/\log{b})$ query time. We conclude by presenting a simple index almost matching this lower bound.

\keywords{Range queries \and Lower bounds \and Succinct data structures.}
\end{abstract}

\section{Introduction}\label{sec:introduction}

Range queries are an extensively studied class of problems in data structures, asking to preprocess a data set so that, given a query range, some function of its elements can be computed efficiently. Classic examples include range minimum~\cite{DBLP:conf/stoc/GabowBT84,DBLP:journals/siamcomp/BerkmanV93,bender2000lca,DBLP:journals/jda/FerradaN17,DBLP:conf/wea/BaumstarkGHL17,DBLP:journals/siamcomp/FischerH11,DBLP:reference/algo/Fischer16,bille2025dynamic,FerraginaL25,DBLP:journals/tcs/JoS26,DBLP:conf/wea/CarmonaL26}, range median~\cite{DBLP:journals/njc/KrizancMS05,DBLP:conf/sofsem/Petersen08a,PETERSEN2009225,DBLP:journals/tcs/BrodalGJS11,DBLP:conf/isaac/BrodalJ09}, and range mode~\cite{DBLP:journals/ipl/PetersenG09,DBLP:journals/mst/ChanDLMW14} to name a few. Such problems are commonly studied in two settings~\cite {DBLP:reference/algo/Fischer16}: the
\emph{indexing model}, where the data structure may access the input, and the
\emph{encoding model}, where the input is no longer available after preprocessing and queries must be answered from the encoding alone.

Recently, Fischer \& Lari~\cite{FischerL26} introduced the data
structure version of the well-known \emph{element distinctness problem}: given an array, preprocess it to answer \textsc{All-Distinct} queries, which ask
whether a given range contains only distinct elements. When $A$ is the document
array of a text, such queries have interesting applications in information
retrieval~\cite{FischerL26}. In their work, they introduce the following sequences, which we also use extensively:

\begin{definition}\label{def:p_n_array}
Let $A$ be an array of $n$ elements from an alphabet of size $\sigma$. For every $0 \le i < n$, let $P_A[i] = \min\{\, j \le i \mid \rau_A(j,i) = \tru\}$ and $N_A[i] = \max\{\, j \ge i \mid \rau_A(i,j) = \tru\}$.
\end{definition}

%In particular, in the encoding model, they prove that the $P$-sequences (or symmetrically the $N$-sequences) characterize the equivalence classes of the \textsc{All-Distinct} problem.

Noting that the answer of $\textsc{All-Distinct}(l,r)$ is \textsc{True} if and only if $P_{A}[r] \le l$ (or symmetrically $N_{A}[l] \ge r$), their solutions are based on efficiently storing and accessing such sequences in the two aforementioned models. In particular, in the encoding model, they prove that any data structure that answers \textsc{All-Distinct} queries can be used to reconstruct the $P$-sequence (or symmetrically the $N$-sequence) of the input array. Based on this, they give an information-theoretic lower bound of $2n-O(\log{n})$ bits, which is further refined to $n\log{r_{\sigma}}-3\log{(\sigma+2)}+O(1)$ bits when the alphabet size $\sigma$ is constant, where $r_{\sigma} = 4\cos^{2}{(\pi/(\sigma+2))}$. Both lower bounds are matched within lower-order terms with an encoding supporting $O(1)$ time queries. In the indexing model, for every $1 \le b \le n$, they obtain an index using $O((n/b)\log{b})$ bits with $T_{dist}(b,\sigma)$ query time, where $T_{dist}(b,\sigma)$ is the time needed to solve an $O(b)$-sized element distinctness instance, see~\cite{VanDerHoogRR2025-ElementDistinctness} for a summary of known tradeoffs.

In this work, we extend the results of~\cite{FischerL26}. Specifically, in Sec.~\ref{sec:entropy}, we focus on refining their lower bounds in the encoding model to the average case, assuming that the instances are drawn uniformly at random. To this end, we use the result of Alon \& Orlitsky~\cite{AlonO94}, which lower-bounds the average code length of any one-to-one code for a discrete random variable $X$ by $H(X)-\log{(H(X)+1)}-\log{e}$, where $H(X)$ is the Shannon entropy~\footnote{Notice that the Shannon entropy alone lower-bounds the average length of prefix-free (or, more generally, uniquely decodable) codes. Here we consider arbitrary one-to-one binary encodings, for which the entropy bound does not directly apply, and Alon \& Orlitsky's result is needed.}. We therefore analyze the Shannon entropy of the $P$-sequences, obtaining a lower bound expressed in terms of two functions, $Q$ and $C$, defined below.
%To this end, we analyze the Shannon entropy of the $P$-sequences, obtaining a lower bound that is expressed in terms of two functions, $Q$ and $C$, defined below.

\begin{definition}\label{def:Q_function}
Let $p_k = \prod_{j=1}^{k}(1 - j/\sigma)$ for $k \ge 0$, where $p_0 = 1$ and $p_k = 0$ for $k \ge \sigma$.
\begin{equation*}
  Q(\sigma) := \sum_{k=0}^{\sigma-1} p_k,\quad C(\sigma) := \sum_{k=1}^{\sigma-1}\frac{k}{\sigma}\,p_k\log\frac{1}{1-k/\sigma}.
\end{equation*}
$Q(\sigma)$ is known as Ramanujan's Q-function~\cite{Flajolet95}.
\end{definition}

\begin{theorem}\label{th:p_sequence_entropy_sigma}
Let $A$ be an array of $n$ elements drawn independently and uniformly at random from $[1, \sigma]$, with $1 \le \sigma \le n$. Any encoding of $A$ supporting \textsc{All-Distinct} queries requires, in expectation, at least $H(P_A)-O(\log{n})$ bits, where
\begin{equation*}
    H(P_{A}) = \frac{\big(nQ(\sigma) - \sigma\big)\log\sigma}{\sigma}
    + nC(\sigma) + O(\sigma C(\sigma)) \quad \text{bits.}
\end{equation*}
In particular, as $\sigma \to \infty$,
\begin{equation*}
    H(P_{A}) \ge n\sqrt{\frac{\pi}{2\sigma}}\log\sigma
        - \frac{n\,\log\sigma}{3\sigma}
        - \log\sigma
        + O\!\left(\frac{n\,\log\sigma}{\sigma^{3/2}}\right) \quad \text{bits.}
\end{equation*}
\end{theorem}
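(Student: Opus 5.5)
The plan has three parts: reduce to the entropy of $P_A$, compute that entropy exactly via a Markov-chain structure, and then do the asymptotics.

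\emph{Reduction.} By the result of Fischer \& Lari, any encoding answering \rau{} queries determines $P_A$. Hence such an encoding induces a one-to-one code for the random variable $P_A$, after merging arrays with equal $P$-sequences, which can only shorten the codes. Alon \& Orlitsky's bound then gives expected length at least $H(P_A)-\log(H(P_A)+1)-\log e$. Since $H(P_A)\le n\log n$, this is $H(P_A)-O(\log n)$.

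\emph{Exact entropy.} Write $L_i=i-P_A[i]+1$ for the length of the longest distinct suffix ending at $i$. The process $(L_i)$ is a Markov chain that starts at $L_0=1$. Given $L_{i-1}=k$, the window $A[i-k..i-1]$ consists of $k$ distinct symbols, and $A[i]$ is uniform and independent of everything before it.
\begin{itemize}
\item With probability $1-k/\sigma$, $A[i]$ is new and $L_i=k+1$.
\item With probability $1/\sigma$ each, $A[i]$ equals the symbol at distance $j$ back, for $j=1,\dots,k$, and then $L_i=j$.
\end{itemize}
The symbols in the window are distinct, and by symmetry their identities are uniform given the past of $L$, so these transition probabilities hold conditionally on the whole history. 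Therefore $H(P_A)=\sum_i H(L_i\mid L_{i-1})$, and
\[
H(L_i\mid L_{i-1}=k)=\frac{k}{\sigma}\log\sigma+\Bigl(1-\frac{k}{\sigma}\Bigr)\log\frac{1}{1-k/\sigma}.
\]

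Next I identify the stationary distribution. I claim $\pi_k\propto p_{k-1}$ for $k\ge1$. The forward balance $\pi_{k+1}=\pi_k(1-k/\sigma)$ matches the ratio $p_k/p_{k-1}$. The equation for state $1$ reads $\pi_1=\sum_k\pi_k/\sigma$, which holds because $\pi_1=1/Q(\sigma)$ and each $\pi_k=p_{k-1}/Q(\sigma)$. So $\pi_k=p_{k-1}/Q(\sigma)$. Averaging the conditional entropy against $\pi$ gives the per-step entropy
\[
\frac{1}{Q}\sum_k p_{k-1}\Bigl[\frac{k}{\sigma}\log\sigma+\Bigl(1-\frac{k}{\sigma}\Bigr)\log\frac{1}{1-k/\sigma}\Bigr].
\]
Using $p_{k-1}(1-k/\sigma)=p_k$ and $\sum_k p_{k-1}k/\sigma=\sum_k(p_{k-1}-p_k)=1$, this equals $(\log\sigma+\tilde C)/Q$, where $\tilde C$ is a sum of the same shape as $C(\sigma)$.

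To connect with the exact statement, I expect the cleaner route is a telescoping/renewal identity rather than the stationary average. Note that $\mathbb{E}[\#\{i: L_i\le L_{i-1}\}]\log\sigma$ is the "reset" part. The claimed main term $(nQ-\sigma)\log\sigma/\sigma$ suggests that the expected number of resets is $n/Q$ up to a boundary correction of order $\sigma/Q$. The $nC$ term presumably arises after reorganizing the sum of $p_k\log\frac1{1-k/\sigma}$ with the weight $k/\sigma$.

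\emph{Main obstacle.} The hardest step is reconciling the exact formula, including the normalization by $Q$ versus $\sigma$, and controlling the transient from $L_0=1$ to within $O(\sigma C(\sigma))$. I would handle the transient by coupling: the chain regenerates at every reset, and its mixing time is $O(\sqrt\sigma)$, since $L$ rarely exceeds $O(\sqrt\sigma)$. With that, the sum over the first $O(\sigma)$ steps differs from the stationary value by at most $O(\sigma)$ times the per-step entropy scale.

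\emph{Asymptotics.} For $\sigma\to\infty$ I use Ramanujan's expansion $Q(\sigma)=\sqrt{\pi\sigma/2}-\tfrac13+O(\sigma^{-1/2})$, which yields $n\sqrt{\pi/(2\sigma)}\log\sigma-\frac{n\log\sigma}{3\sigma}+O(n\log\sigma/\sigma^{3/2})$. The boundary term contributes $-\log\sigma$, and the remaining terms in $C$ are nonnegative and can be dropped for the inequality.
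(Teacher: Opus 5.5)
\textbf{There is a genuine gap: the stationary distribution is wrong.} Your reduction is fine, and so are the Markov-chain reformulation and the formula for $H(L_i\mid L_{i-1}=k)$. Keeping the shortest codeword per $P$-value is a detail the paper glosses over. The central step, however, fails: $\pi_k\propto p_{k-1}$ is not the stationary law of $L$. For a state $m\ge 2$ the balance equation is $\pi_m=\pi_{m-1}\bigl(1-\frac{m-1}{\sigma}\bigr)+\frac1\sigma\sum_{k\ge m}\pi_k$. Your ``forward balance'' drops the inflow from every state $k\ge m$ that repeats the symbol at distance $m$. Your check at state $1$ is also false, since $\sum_k\pi_k/\sigma=1/\sigma\neq 1/Q(\sigma)$.

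The correct stationary law is the size-biased one, $\pi_k=\frac{k}{\sigma}p_{k-1}=p_{k-1}-p_k$ for $1\le k\le\sigma$. This is exactly what Lemma~\ref{lem:p_i_q} gives for $i>\sigma$. With it, $\sum_k k\pi_k=\sum_k\frac{k^2}{\sigma}p_{k-1}=Q(\sigma)$ and $\pi_k(1-k/\sigma)=\frac{k}{\sigma}p_k$, so the per-step entropy is $\frac{Q(\sigma)\log\sigma}{\sigma}+C(\sigma)$.

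Your value $(\log\sigma+\tilde C)/Q$ is a different quantity. Its leading term $\sqrt{2/(\pi\sigma)}\log\sigma$ is off by a factor $\pi/2$, and already for $\sigma=3$ it gives about $1.232$ bits per symbol instead of $1.3627$. The ``$n/Q$ resets'' heuristic is the same mistake: the expected number of resets is $nQ(\sigma)/\sigma-1$. This is why you could not reconcile ``$Q$ versus $\sigma$''. Your asymptotic paragraph silently switches to the correct main term, which your own computation does not produce.

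\textbf{The transient needs no coupling, and your coupling bound is too weak.} For $i>\sigma$, $L_{i-1}$ is a function of the i.i.d.\ symbols $A[i-\sigma..i-1]$ alone, so its law is exactly $\pi$. For $i\le\sigma$ its law is $\frac{k}{\sigma}p_{k-1}$ for $k<i$ plus an atom $p_{i-1}$ at $k=i$, which gives $\mathbb{E}[L_{i-1}]=\sum_{k<i}p_k$. Summing over $i\le\sigma$ and using $\sum_k kp_k=\sigma-Q(\sigma)$ yields exactly $\frac{(nQ-\sigma)\log\sigma}{\sigma}$. This exact summation is where the $-\log\sigma$ comes from.

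Bounding the first $O(\sigma)$ steps by ``$O(\sigma)$ times the per-step entropy'' costs $O(Q\log\sigma+\sigma C)$. Since $Q\log\sigma=\Theta(\sqrt\sigma\log\sigma)$ while $\sigma C(\sigma)=\Theta(\sqrt\sigma)$, this matches neither the $O(\sigma C)$ error of the first display nor the explicit $-\log\sigma$ of the second. Absorbing the loss into $O(n\log\sigma/\sigma^{3/2})$ would require $n=\Omega(\sigma^2)$.

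Finally, resets are not regeneration times. A reset from state $k$ lands uniformly in $\{1,\dots,k\}$, so the post-reset state depends on the pre-reset state.
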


As an immediate consequence, some values of the leading terms in the lower bound can be explicitly computed for $\sigma = O(1)$ (e.g., $2, 3, 4, 5$), giving $n$, $1.3627n$, $1.5153n$, and $1.5824n$ bits. When $\sigma=\omega(1)$ instead, for instance at $\sigma=n$, the lower bound is approximately $\sqrt{\pi n/2}\,\log n - O(\log{n})$ bits. In both cases, for random arrays, this surpasses the worst-case lower bounds~\cite{FischerL26}. Based on Thm.~\ref{th:p_sequence_entropy_sigma}, in Sec.~\ref{sec:enc_const_alphabet} and~\ref{sec:enc_gen_alphabet}, we design three average-case optimal encodings.
%providing different query time tradeoffs depending on $\sigma$.

\begin{theorem}\label{th:opt_encodings_all}
Let $A$ be an array of $n$ elements drawn independently and uniformly at random from $[1,\sigma]$, with $1 \le \sigma \le n$. There exist different encodings of $A$ supporting \textsc{All-Distinct} queries whose expected size matches the lower bound of Thm.~\ref{th:p_sequence_entropy_sigma} up to lower-order terms. These encodings provide the following tradeoffs:
\begin{itemize}
  \item worst-case query time
    \[
      \begin{cases}
        O(1)              & \sigma = O(\operatorname{polylog} n),\\
        o(\log^{2}{\log{n}}) & \sigma = n^{o(1)},\\
        O(\log{\log{n}})     & \sigma = \Theta(n^{\beta}),\ \text{for any constant}\ \beta \in (0,1],
      \end{cases}
    \]
    \item $O(1)$ expected query time, for every $\sigma = \omega(1)$
\end{itemize}
\end{theorem}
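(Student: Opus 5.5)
Our plan is to encode the $P$-sequence, since $\textsc{All-Distinct}(l,r)$ holds iff $P_A[r]\le l$. It is more convenient to store the lengths $\ell_i = i - P_A[i] + 1$, which satisfy $\ell_{i} \le \ell_{i-1}+1$ and $\ell_i\le\sigma$. For random $A$ the sequence $P_A$ is a Markov-like object. Given $\ell_{i-1}=k$, the new symbol $A[i]$ either is new to the window, in which case $\ell_i=k+1$, with probability $1-k/\sigma$; or it collides with the element at a uniformly random position of the window, giving $\ell_i\in\{1,\dots,k\}$ with probability $1/\sigma$ each. The entropy computation behind Thm.~\ref{th:p_sequence_entropy_sigma} is exactly the sum over $i$ of the conditional entropies of this step: roughly $\frac{k}{\sigma}\log\sigma$ for which position collided, plus the binary entropy of the event ``extend vs.\ drop''. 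So we design a code that spends, per position, a flag for ``extend'' and, on a drop, $\log k$ (at most $\log\sigma$) bits for the new length. We make this entropy-tight by arithmetic/block coding the extend/drop events with their exact conditional probabilities $k/\sigma$. Concretely, we cut $A$ into blocks of $b$ positions, encode each block's transitions (given the starting length) with an optimal enumerative code, and store block boundaries via the standard two-level sampling of pointers ($o(n)$ bits). Summing expected code lengths gives $H(P_A)+O(n/b)+o(n)$. This matches the lower bound up to lower-order terms, provided the per-block redundancy $O(1)$ is $o$ of the per-block entropy. For large $\sigma$ the entropy is only about $n\log\sigma/\sqrt{\sigma}$, so $b$ must be chosen $\omega(\sqrt\sigma/\log\sigma)$.

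For queries we need $P_A[r]$, i.e.\ $\ell_r$. For $\sigma=O(\operatorname{polylog} n)$ we take blocks of size $b=\Theta(\log n/\log\sigma)$ scaled so that the encoding of a block with its starting state fits in $\epsilon\log n$ bits, and decode it by a universal lookup table of $o(n)$ bits. We also store $\ell$ at every block start explicitly in $O(\log\sigma)$ bits. This costs $O(n\log\sigma/b)$ bits, which is lower order compared to $n\log\sigma/\sqrt\sigma$ only when $b\gg\sqrt\sigma$, which holds for polylog $\sigma$ with suitable constants. Using a two-level scheme (superblock samples in $\log n$ bits, block samples relative in $\log\sigma$ bits) then gives $O(1)$ worst-case time.

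For larger $\sigma$ a block holding $\omega(\sqrt\sigma)$ positions no longer fits in a word, and decoding would take $\Theta(b/w)$ table steps. Instead we exploit the structure of $P_A$: it is nondecreasing in $i$, because $\ell_i\le\ell_{i-1}+1$ means $P_A[i]\ge P_A[i-1]$. Hence $P_A$ is a monotone sequence that is constant-increment except at drop points. We store the set $D$ of drop positions and the values $P_A$ there, the latter as a monotone sequence with Elias--Fano-like coding. Between consecutive drops $P_A$ is constant, so $P_A[r]=P_A[\mathrm{pred}_D(r)]$. A query is then a predecessor search on $D$ plus access. The expected $|D|$ is $\approx n\sqrt{\pi/(2\sigma)}$ and each stored value costs $\approx\log\sigma$ bits, matching the leading term; the precise entropy-optimal coding of the drop values given their allowed ranges needs care to hit $C(\sigma)$ exactly. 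Predecessor on $D$, a set of density $m/n$ with $m=n/\sqrt\sigma$, can be done with $y$-fast-trie / van Emde Boas style structures in $O(\log\log(n/m))=O(\log\log\sigma)$ time using $o$ of the entropy extra space. This yields $O(\log\log n)$ for $\sigma=\Theta(n^\beta)$. For $\sigma=n^{o(1)}$ the optimal Pătraşcu–Thorup tradeoff gives $o(\log^2\log n)$ (indeed $\log\log\sigma=o(\log\log n)$ times a factor). For expected $O(1)$ time, we add a bucket array over $[0,n)$ with buckets of size $\sqrt\sigma$ and scan within a bucket, since drops are roughly geometrically spaced with mean $\sqrt\sigma$ per bucket.

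The main obstacle is the space accounting. We must show that the auxiliary samples and the predecessor structures use $o(n\log\sigma/\sqrt\sigma)$ bits, and that coding the drop values attains $nC(\sigma)$ rather than merely $|D|\log\sigma$. We would first try to encode each drop value conditioned on the previous length $k$ with an arithmetic code over $\{1,\dots,k\}$, grouped in word-sized chunks, and bound the redundancy per chunk by $O(1)$.
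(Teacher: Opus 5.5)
Your construction for $\sigma=\omega(1)$ is essentially the paper's. You store the run starts of $P_A$ (your drop set $D$) and the values of $P_A$ there, and answer by predecessor plus access. You get $O(1)$ expected time by bucketing positions into blocks of size $\sqrt{\sigma}$ with run counts.

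The genuine gap is the first worst-case branch: $O(1)$ time for \emph{every} $\sigma=O(\operatorname{polylog} n)$. Your block scheme stores a $\log\sigma$-bit state per block, besides the per-block coding redundancy. It therefore needs $b=\omega(\sqrt{\sigma})$ to stay below the lower bound, which is $\Theta((n/\sqrt{\sigma})\log\sigma)$. But $O(1)$-time table decoding forces $b=O(\log n/\log\sigma)$. The two requirements are compatible only for $\sigma=o((\log n/\log\log n)^2)$. For $\sigma=\log^3 n$, say, the overhead $\Theta(n\log\sigma/b)=\Theta(n\log^2\log n/\log n)$ exceeds the entire lower bound $\Theta(n\log\log n/\log^{3/2} n)$. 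No choice of constants helps, so your claim that $b\gg\sqrt{\sigma}$ ``holds for polylog $\sigma$ with suitable constants'' is false. Likewise, your generic $o(n)$ pointer overhead is not lower order once $\sigma=\omega(1)$, because the entropy itself is $o(n)$.

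Your run-based structure does not rescue this range either: its predecessor search costs $O(\log\log\sigma)=O(\log\log\log n)$, not $O(1)$. The missing ingredient is an FID with constant query time whose redundancy is polynomially small in $\log n$. The paper stores run starts and run values in P\u{a}tra\c{s}cu's FID, whose redundancy is $n/(\log n/t)^t+\tilde{O}(n^{3/4})$ at $O(t)$ query time. It picks a constant $t$ with $\sigma=O(\log^{2t}n)$, which makes the redundancy $O(n/\log^t n)=o((n/\sqrt{\sigma})\log\sigma)$.

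What you call the main obstacle is not one. Using $\sum_k k^2p_k=(\sigma+1)Q(\sigma)-2\sigma$ and $\log(1/(1-x))=O(x)$ for $x\le 1/2$ (the terms with $k>\sigma/2$ are exponentially small), you get $C(\sigma)=O(1/\sqrt{\sigma})$. Hence $nC(\sigma)=O(n/\sqrt{\sigma})$ is itself lower order for $\sigma=\omega(1)$ and need not be matched. What you actually need is a step your plan omits. The positions and values cost about $2\log\binom{n}{r}\le 2r\log(ne/r)$ bits, which is concave in the random variable $r$. Jensen's inequality with $\mathbb{E}[r]=1+(nQ(\sigma)-\sigma)/\sigma$ then gives $n\sqrt{\pi/(2\sigma)}\log\sigma+O(n/\sqrt{\sigma})$. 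The same check must be made for the redundancy of whatever predecessor structure you use; the paper has to split a non-concave redundancy term of Gupta et al.'s FID.

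Your $O(\log\log(n/m))$-time predecessor with $o(m\log(n/m))$ extra bits is asserted, not constructed or cited. If it holds it would even give $O(\log\log n)$ for all $\sigma=\omega(1)$. As written it is a black box, whereas the paper relies on concrete FIDs: Gupta et al.\ for $\sigma=n^{o(1)}$ and Liang--Zhou for $\sigma=\Theta(n^\beta)$. Also, $\log\log\sigma=o(\log\log n)$ is false for, e.g., $\sigma=n^{1/\log\log n}$, though this does not affect the $o(\log^2\log n)$ bound.
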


Moving to the indexing model, in Sec.~\ref{sec:cell_probe_lb}, we prove the following lower bound.

%Moving to the indexing model, in Sec.~\ref{sec:cell_probe_lb}, we prove the following space-time tradeoff lower bound.

\begin{theorem}\label{th:cell_probe_lb}
%Given $n$ and $b$ such that $1< b < n$, let $A$ be an array of $n$ elements from an alphabet of size $b+1\le \sigma \le n$. Any indexing data structure solving \textsc{All-Distinct} queries using $n/b$ bits of space on top of $A$ must have $\Omega(b/\log{b})$ query time.
Given $n$ and $b$ such that $1< b < n$, let $A$ be an array of $n$ elements from an alphabet of size $b+1\le \sigma \le n$. Any indexing data structure solving \textsc{All-Distinct} queries using $n/b$ bits of space must have $\Omega(b/\log{b})$ query time.
\end{theorem}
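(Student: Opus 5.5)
We prove the bound with a standard indexing-model argument in the style of Demaine–López-Ortiz and Golynski. We build a hard family of inputs split into $n/b$ blocks of length $b$, and show that an index of $n/b$ bits whose queries each probe at most $t$ cells of $A$ can serve as an encoding of the whole family. Comparing that encoding's length with the entropy of the family then forces $t=\Omega(b/\log b)$.

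\textbf{Hard distribution.} Divide $A$ into $m=n/b$ blocks of size $b$. Inside each block, put a permutation of $b$ distinct symbols from $[1,\sigma]$; this uses $\sigma\ge b+1$. We then either leave the block all distinct or overwrite one position so that a single duplicate appears. A simple choice: each block holds $b$ distinct values, except that one hidden position $h_j$ may be replaced by the spare symbol $b+1$ or by a copy of another symbol in the block. The point of the construction is that answering $\textsc{All-Distinct}$ on a block requires locating a collision, which is hidden among $b$ positions. In the indexing model the query algorithm may read both the index and the array cells. Fix the index $I$, of $n/b$ bits. For each block the algorithm decides distinctness using $t$ probes into $A$, and these probes are adaptive, depending on $I$ and on values already read.

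\textbf{Encoding argument.} We show that, from $I$ together with a short description of the cells probed, we can reconstruct per-block information whose total entropy is about $m\log b$ bits, namely the location of the collision in each block, which ranges over roughly $\binom{b}{2}$ or $b$ possibilities. Concretely, we run the query for each block and record the answers to its $t$ probes; each probed value costs $O(\log b)$ bits, since it lies among at most $b+1$ relevant symbols. When a query reports \emph{not distinct}, a correct algorithm must have seen both colliding cells; otherwise an adversary could swap an unread cell and change the answer. So the probe transcript reveals the collision. The total description is then $n/b + m\cdot O(t\log b)$ bits, and it must be at least $H\approx m\log b$ plus the entropy of the rest of the array. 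The rest of the array must be handled through a conditional-entropy argument: we condition on everything except the hidden collision positions, and we count that the index can carry only $n/b=m$ bits, i.e.\ $O(1)$ bits per block.

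\textbf{Main obstacle.} The naive count above gives only $t\log b\gtrsim \log b$, i.e.\ $t=\Omega(1)$, which is too weak. To reach $\Omega(b/\log b)$ we instead use the adversary/probability form. The index gives $O(1)$ bits per block on average, so for most blocks the index essentially does not distinguish where the collision lies. Conditioned on $I$, the collision position is close to uniform over $\Omega(b)$ positions; by Fano/entropy, $H(h_j\mid I)\ge \log b - O(1)$ on average. To find it, an adaptive algorithm must read $\Omega(b)$ cells in the worst case. We must also account for the fact that each probe returns a value of $\log b$ bits that can leak information about other positions, and this leakage is exactly what costs the $\log b$ factor. We make this precise by encoding $A$ as follows: the index, plus for each block the sequence of probed values, plus the unprobed cells written in a compressed form that exploits the knowledge that they avoid the collision. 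This yields $t\log b \ge \Omega(b)$ per block, i.e.\ $t=\Omega(b/\log b)$. Getting this compression step to save exactly the bits the index cannot supply is where we expect the real work to lie.
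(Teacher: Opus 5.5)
Your proposal has a genuine gap. The decisive inequality $t\log b\ge\Omega(b)$ is asserted, not derived, and the framework you set up cannot produce it. Two ingredients are missing.

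\textbf{First: one query per block cannot be hard.} You run a single query per block and argue that a \emph{not distinct} answer forces the algorithm to have read both colliding cells. This is false in the indexing model. The index has $n/b$ bits, one per block, which is exactly enough to store whether each block is all-distinct, so whole-block queries need zero probes. The hardness must come from several queries on sub-ranges whose answers jointly carry more information than the index holds.

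The paper restricts to the family $\mathcal{C}$: every block contains exactly two copies of the symbol $1$, and every other position $k$ holds the fixed value $(k\bmod b)+2$. The pair positions $(l_i,r_i)$ are recovered by two binary searches, i.e.\ $2(\log b+1)$ \textsc{All-Distinct} queries per block. These queries extract about $\log\binom{b}{2}\approx 2\log b$ bits per block, against the single bit of index. This $O(\log b)$ query count is where the $\log b$ in $\Omega(b/\log b)$ comes from, not from leakage of $\log b$-bit probe values.

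\textbf{Second: each probe must carry far less than one bit.} Your accounting charges $O(\log b)$ bits per probe, and as you observed this only gives $t=\Omega(1)$. Compressing the unprobed cells of a random permutation does not rescue it. Beyond the probed cells themselves, the queries reveal only the $O(\log b)$ bits describing the collision, so the savings per block are too small.

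The missing idea is to make every probe a binary, heavily biased test. In $\mathcal{C}$, a probe at $A[k]$ can only return $1$ or $(k\bmod b)+2$. The argument then runs as follows.
\begin{enumerate}
\item Pigeonhole over the $2^{n/b}$ possible indexes: at least $(b(b-1)/4)^{n/b}$ inputs share one index.
\item Composing the per-block decision trees gives a binary tree of depth $d$ that distinguishes all these inputs.
\item Every root-to-leaf path has at most $2n/b$ edges on which a $1$ is read, so the tree has at most $\binom{d+2n/b}{2n/b}$ leaves.
\item Hence $(n/b)\log(b(b-1)/4)\le(2n/b)\log\bigl(e(d+2n/b)/(2n/b)\bigr)$, which forces $d=\Omega(n)$.
\item At most $2(n/b)(\log b+1)$ queries were issued along any path, so some query performs $\Omega(b/\log b)$ probes.
\end{enumerate}
Your intuition, that the collision stays nearly uniform given the index and so finding it needs $\Omega(b)$ probes, is exactly what this sparse-path counting makes rigorous. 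It only works once probes are yes/no questions with rare hits and the collision is extracted through several queries rather than one.
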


In Sec.~\ref{sec:new_index}, we conclude by designing a very simple index using $O(n/b)$ bits (i.e., shaving an $O(\log{b})$ factor from the one of~\cite{FischerL26}) having the same $T_{dist}(b,\sigma)$ query time.
Here, $T_{dist}(b,\sigma)$ denotes the time to solve an online (non-preprocessed) instance of the element distinctness problem.
As an example, if an optimal comparison-based sorting algorithm is used to solve the $O(b)$-sized element distinctness instance, $T_{dist}(b,\sigma) = O(b\log{b})$, and thus it is only an $O(\log^{2}{b})$ factor away from the optimal query time within that space. We leave open the problem of either designing an index that matches Thm.~\ref{th:cell_probe_lb} or tightening the lower bound. Lastly, we note that all our encoding results hold in the \emph{transdichotomous word-RAM model}, where the word size $w$ satisfies $w = \Theta(\log{n})$, and arithmetic as well as bitwise operations on $w$ bits are performed in $O(1)$ time. Additionally, we assume that logarithms hide their ceiling and thus return integers, and we adopt the standard entropy convention $0\log(1/0) = 0$~\cite{cover1999elements}. 

%(equivalently, $\lim_{x\to 0^{+}} x\log(1/x) = 0$)~\cite{cover1999elements}.

\section{A Space Lower Bound for Encoding Random Instances}\label{sec:entropy}

%We begin by introducing some useful results (proofs in Appendix~\ref{app:missing_proofs_entropy}) that are extensively used in the following.
We begin with some useful results that are used in the following (see Appendix~\ref{app:missing_proofs_entropy}).
\begin{lemma}\label{lem:identities}
The following identities hold for every $\sigma \ge 1$:
\begin{align}
    p_{k-1} - p_k &\;=\; \tfrac{k}{\sigma}\,p_{k-1},
        \qquad 1 \le k \le \sigma, \label{eq:id_basic} \\[4pt]
    \sum_{k=0}^{\sigma-1} k\,p_k &\;=\; \sigma - Q(\sigma). \label{eq:id_kpk}
\end{align}
\end{lemma}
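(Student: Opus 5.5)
Both identities follow directly from the product form of $p_k$ in Def.~\ref{def:Q_function}. The plan is to prove \eqref{eq:id_basic} first and then obtain \eqref{eq:id_kpk} by summing it telescopically.

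For \eqref{eq:id_basic}, fix $1 \le k \le \sigma$. For $k \le \sigma$ the product definition gives $p_k = p_{k-1}(1-k/\sigma)$. This holds even at $k=\sigma$, where the last factor is $0$, consistent with the convention $p_k=0$ for $k\ge\sigma$. Rearranging yields $p_{k-1}-p_k = \tfrac{k}{\sigma}p_{k-1}$.

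For \eqref{eq:id_kpk}, sum \eqref{eq:id_basic} over $k=1,\dots,\sigma$. The left-hand side telescopes to $p_0 - p_\sigma = 1 - 0 = 1$. The right-hand side is $\tfrac{1}{\sigma}\sum_{k=1}^{\sigma} k\,p_{k-1}$, and reindexing with $j=k-1$ turns it into $\tfrac{1}{\sigma}\sum_{j=0}^{\sigma-1}(j+1)p_j = \tfrac{1}{\sigma}\bigl(\sum_{j=0}^{\sigma-1} j\,p_j + Q(\sigma)\bigr)$. Setting both sides equal and multiplying by $\sigma$ gives $\sum_{j=0}^{\sigma-1} j\,p_j = \sigma - Q(\sigma)$.

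There is no real obstacle here. The only point needing care is the boundary term $p_\sigma=0$, which makes the telescoping sum close exactly and keeps the range of $Q$ matching $0\le j\le\sigma-1$. The degenerate case $\sigma=1$ should also be checked: there $Q(1)=p_0=1$ and $\sum_{k=0}^{0} k\,p_k=0=1-1$, so the identity holds.
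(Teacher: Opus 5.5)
Your proposal is correct and follows essentially the same route as the paper: derive \eqref{eq:id_basic} from the recurrence $p_k = p_{k-1}(1-k/\sigma)$, then sum it over $k$ so that it telescopes to $p_0 - p_\sigma = 1$ and yields $\sum_{k=0}^{\sigma-1}(k+1)p_k = \sigma$. The paper merely shifts the index before summing, writing $(k+1)p_k = \sigma(p_k - p_{k+1})$ for $0 \le k \le \sigma-1$, which is the same computation.
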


\begin{lemma}\label{lem:p_i_q}
Let $A$ be an array of $n$ elements drawn independently and uniformly at random from $[1,\sigma]$. For every $1 \le i \le n-1$ and $0 \le q \le i-1$,
\begin{equation*}
    \mathbb{P}(P_{A}[i-1] = q) \;=\;
    \begin{cases}
        p_{i-1}                 & q = 0,\; i \le \sigma, \\[2pt]
        \tfrac{(i-q)}{\sigma}p_{(i-q)-1}     & q \ge 1,\; 1 \le i-q \le \sigma, \\[2pt]
        0                       & \text{otherwise.}
    \end{cases}
\end{equation*}
\end{lemma}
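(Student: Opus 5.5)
The plan is to characterize the event $\{P_A[i-1]=q\}$ directly in terms of the entries $A[q-1],A[q],\dots,A[i-1]$, and then compute its probability by exploiting independence and uniformity. Throughout, set $m=i-q\ge 1$, the length of the window $A[q..i-1]$.

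\emph{Step 1 (monotonicity).} Observe that if $\rau_A(j,i-1)=\tru$ then $\rau_A(j',i-1)=\tru$ for every $j\le j'\le i-1$, since a subrange of a range of distinct elements is distinct. Hence the set $\{j\le i-1 \mid \rau_A(j,i-1)=\tru\}$ is an interval ending at $i-1$, and its minimum is $q$ exactly when:
\begin{itemize}
\item for $q=0$: $A[0..i-1]$ is all distinct;
\item for $q\ge 1$: $A[q..i-1]$ is all distinct and $A[q-1..i-1]$ is not, i.e., $A[q..i-1]$ is all distinct and $A[q-1]\in\{A[q],\dots,A[i-1]\}$.
\end{itemize}

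\emph{Step 2 (distinctness probability).} For $m$ i.i.d.\ uniform symbols, the probability that all are distinct is obtained by revealing them one by one: the $(j+1)$-st symbol avoids the $j$ previously seen distinct values with probability $1-j/\sigma$. The probability is therefore $\prod_{j=1}^{m-1}(1-j/\sigma)=p_{m-1}$. This product vanishes automatically when $m-1\ge\sigma$, consistent with the convention $p_k=0$ for $k\ge\sigma$. For $q=0$ (so $m=i$) this gives $p_{i-1}$, which is $0$ when $i>\sigma$.

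\emph{Step 3 ($q\ge1$).} Condition on $A[q..i-1]$ being distinct, so that it consists of exactly $m$ different values. Since $A[q-1]$ is independent of these entries and uniform on $[1,\sigma]$, it falls in this set with probability $m/\sigma$. Multiplying gives $\frac{m}{\sigma}p_{m-1}=\frac{i-q}{\sigma}p_{(i-q)-1}$. When $m>\sigma$ the factor $p_{m-1}$ is $0$, which yields the ``otherwise'' case.

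There is no real obstacle here. The only point requiring care is the monotonicity argument in Step 1, which justifies reducing the minimum in Definition~\ref{def:p_n_array} to a local condition at position $q-1$. Beyond that, one only needs to check that the index bookkeeping (0-indexed positions, window length $i-q$) matches the subscripts $p_{i-1}$ and $p_{(i-q)-1}$ in the statement.
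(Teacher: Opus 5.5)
Your proof is correct and takes essentially the same route as the paper. Both reduce $\{P_A[i-1]=q\}$ to ``$A[q,i-1]$ is all distinct and, for $q\ge 1$, $A[q-1]$ collides with one of its entries,'' and then compute the probability: the paper counts $\sigma(\sigma-1)\cdots(\sigma-(i-q)+1)\cdot(i-q)\cdot\sigma^{q-1}$ favorable arrays out of $\sigma^{i}$, while you get the same factors by revealing the symbols one at a time.
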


\begin{lemma}\label{lem:p_i_cond}
Let $A$ be an array of $n$ elements drawn independently and uniformly at random from $[1,\sigma]$. For every $1 \le i \le n-1$ and every $0 \le q \le i-1$ with $i - q \le \sigma$,
\begin{equation*}
    \mathbb{P}(P_{A}[i] = p \mid P_{A}[i-1] = q) \;=\;
    \begin{cases}
        1 - (i - q)/\sigma & p = q, \\[2pt]
        1/\sigma           & q < p \le i, \\[2pt]
        0                  & \text{otherwise.}
    \end{cases}
\end{equation*}
\end{lemma}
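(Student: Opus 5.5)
The plan is to describe the event $\{P_A[i]=p\}$ directly in terms of the entries of $A$, and then use independence of positions. Conditioning on $P_A[i-1]=q$ means two things. First, the window $A[q..i-1]$ consists of pairwise distinct values, so it has length $i-q$. Second, either $q=0$ or $A[q-1]$ equals some value in $A[q..i-1]$. This event depends only on $A[0..i-1]$, and the new entry $A[i]$ is independent of it and uniform on $[1,\sigma]$. The hypothesis $i-q\le\sigma$ makes the conditioning event possible, since $p_{i-q-1}>0$ by Lemma~\ref{lem:p_i_q}, so the conditional probability is well defined.

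The first key step is the recurrence $P_A[i]=\max\bigl(P_A[i-1],\,\mathrm{prev}(i)+1\bigr)$. Here $\mathrm{prev}(i)$ is the largest $j<i$ with $A[j]=A[i]$, taken as $-1$ if no such $j$ exists. This follows from Definition~\ref{def:p_n_array}. The range $[j,i]$ is all-distinct exactly when $[j,i-1]$ is all-distinct and $A[i]$ does not occur in $A[j..i-1]$. The first condition is equivalent to $j\ge P_A[i-1]$ (all-distinctness is monotone under shrinking the range), and the second to $j>\mathrm{prev}(i)$.

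Next I do the case analysis on $A[i]$ given the conditioning event.

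\begin{itemize}
\item If $A[i]$ is not among the $i-q$ distinct values of $A[q..i-1]$, then $\mathrm{prev}(i)<q$, so $P_A[i]=q$. Conditioned on the past, this has probability $1-(i-q)/\sigma$.
\item If $A[i]=A[j]$ for some $j\in[q,i-1]$, then $j$ is unique because the window is distinct, so $\mathrm{prev}(i)=j$ and $P_A[i]=j+1\in(q,i]$. Each such $j$ corresponds to exactly one value of $A[i]$, so each $p=j+1$ with $q<p\le i$ has conditional probability $1/\sigma$.
\item All other values of $p$ receive probability $0$.
\end{itemize}

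For the case $p=q$, I also need to check that no other $p\le q$ can arise. This holds because $P_A[i]\ge P_A[i-1]$ from the recurrence.

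The main subtle point is justifying that the conditional distribution of $A[i]$ is uniform and independent of the conditioning event. It suffices that $\{P_A[i-1]=q\}$ is measurable with respect to $A[0..i-1]$, which is immediate since $P_A[i-1]$ is a function of $A[0..i-1]$. A final sanity check: the three probabilities sum to $1-(i-q)/\sigma+(i-q)/\sigma=1$.
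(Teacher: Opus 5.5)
Your proposal is correct and takes essentially the same approach as the paper: condition on the distinct window $A[q..i-1]$, then split on whether the independent, uniform $A[i]$ hits one of its $i-q$ values, which gives $P_A[i]=j+1$ with probability $1/\sigma$ for each $j$, and $P_A[i]=q$ otherwise. Your explicit recurrence $P_A[i]=\max(P_A[i-1],\mathrm{prev}(i)+1)$ and your check that the conditioning event has positive probability are extra rigor that the paper leaves implicit.
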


Given an array $A$, any encoding that answers \textsc{All-Distinct} queries on $A$ can be used to reconstruct $P_{A}$~\cite{FischerL26}. By the result of Alon \& Orlitsky~\cite{AlonO94}, the expected size of such an encoding is at least $H(P_{A})-\log(H(P_A)+1)-\log e$ bits. Thus, it suffices to compute $H(P_{A})$, assuming that each element of $A$ is drawn independently and uniformly at random from $[1,\sigma]$. 
%Given an array $A$, any encoding answering \textsc{All-Distinct} queries on $A$ can be used to reconstruct $P_{A}$~\cite{FischerL26}. To obtain a lower bound on the expected space usage of such encodings, we compute the Shannon entropy $H(P_{A})$ of $P_{A}$, assuming that each element of $A$ is drawn independently and uniformly at random from $[1,\sigma]$. 
%To not overload the notation, we drop the subscript on $P_{A}$ when it is clear from the context.
\begin{align}
    H(P_{A}) &= H(P_{A}[0],\dots,P_{A}[n-1]) \nonumber
    \quad \text{(chain rule, see~\cite[Sec.~2.2]{cover1999elements})} \\[2pt]
    &= H(P_{A}[0]) + \sum_{i=1}^{n-1}H(P_{A}[i]\, \mid\, P_{A}[0],\dots,P_{A}[i-1]). \label{eq:cond_entropy}
\end{align}

Consider $H(P_{A}[i] \mid P_{A}[0],\dots,P_{A}[i-1])$. Conditional on $P_{A}[i-1] = q$, the values $A[q,i-1]$ form a tuple of $i - q$ distinct symbols from $[1,\sigma]$, while $A[i]$ is drawn uniformly and independently from $[1,\sigma]$. Since $P_{A}[i]$ is a function of $A[q,i]$ and, by Lemma~\ref{lem:p_i_cond}, its conditional distribution given $P_{A}[i-1] = q$ depends only on $i-q$ and $\sigma$, we have $H(P_{A}[i] \mid P_{A}[0],\dots,P_{A}[i-1]) = H(P_{A}[i] \mid P_{A}[i-1])$. Moreover, since $P_{A}[0] = 0$ with probability $1$, $H(P_{A}[0]) = 0$. Therefore:
%Consider $H(P_{A}[i] \mid P_{A}[0],\dots,P_{A}[i-1])$. Conditional on $P_{A}[i-1] = q$, the suffix $A[q,i-1]$ is a uniformly random tuple of $i - q$ distinct values from $[1,\sigma]$, independent of $A[0,q-1]$. Since $A[i]$ is independent of $A[0,i-1]$ and $P_{A}[i]$ is a function of $A[q,i]$ alone, the conditional distribution of $P_{A}[i]$ given the full prefix depends only on $P_{A}[i-1]$, thus $H(P_{A}[i] \mid P_{A}[0],\dots,P_{A}[i-1]) = H(P_{A}[i] \mid P_{A}[i-1])$. Moreover, since $P_{A}[0] = 0$ with probability $1$, $H(P_{A}[0]) = 0$. Therefore:
\begin{align}
    H(P_{A})
    &= \sum_{i=1}^{n-1} H(P_{A}[i] \mid P_{A}[i-1])
    \quad\text{(conditional entropy, see~\cite[Sec.~2.2]{cover1999elements})}
    \label{eq:entropy} \\[2pt]
    &= \sum_{i=1}^{n-1}\sum_{q=0}^{i-1}\sum_{p=q}^{i}
    \mathbb{P}(P_{A}[i] = p, P_{A}[i-1]=q)
    \nonumber\\[-1pt]
    &\qquad\qquad {}\cdot
    \log\big(1/\mathbb{P}(P_{A}[i]=p \mid P_{A}[i-1]=q)\big)
    \nonumber \\[2pt]
    &= \sum_{i=1}^{n-1}\sum_{q=0}^{i-1}\sum_{p=q}^{i}
    \mathbb{P}(P_{A}[i]=p \mid P_{A}[i-1]=q)
    \mathbb{P}(P_{A}[i-1] = q)
    \nonumber\\[-1pt]
    &\qquad\qquad {}\cdot
    \log\big(1/\mathbb{P}(P_{A}[i]=p \mid P_{A}[i-1]=q)\big).
    \nonumber
\end{align}

By Lemma~\ref{lem:p_i_cond}, when $q+1\le p \le i$ we have $\mathbb{P}(P_{A}[i]=p \mid P_{A}[i-1]=q) = 1/\sigma$, hence $\log(1/\mathbb{P}(P_{A}[i]=p \mid P_{A}[i-1]=q)) = \log\sigma$; the remaining term $p = q$ gives $\mathbb{P}(P_{A}[i]=q \mid P_{A}[i-1]=q) = 1 - (i-q)/\sigma$, thus $\log(1/\mathbb{P}(P_{A}[i]=p \mid P_{A}[i-1]=q)) = \log{(1/(1-(i-q)/\sigma)}$. Separating these two regimes in Eq.~\ref{eq:entropy} gives
\begin{equation}\label{eq:H_split}
\begin{aligned}
H(P_{A})
  &= \frac{\log \sigma}{\sigma}
     \sum_{i=1}^{n-1} S_i
     + \sum_{i=1}^{n-1} D_i,\quad
S_i := \sum_{q=0}^{i-1}
     \mathbb{P}(P_{A}[i-1] = q)\,(i-q), \\[0.5ex]
D_i
  &:= \sum_{q=0}^{i-1}
     \mathbb{P}(P_{A}[i-1] = q)
     \left(1 - \frac{i-q}{\sigma}\right)
     \log\!\left(
       \frac{1}{1-(i-q)/\sigma}
     \right).
\end{aligned}
\end{equation}

We proceed by evaluating the two sums separately. Starting from $S_i$, we set $k = i-q$ and distinguish between two cases:

\begin{itemize}
\item \emph{Case 1: $1 \le i \le \sigma$.} Applying Lemma~\ref{lem:p_i_q} (the $q = 0$ and $q \ge 1$ branches):
\begin{equation*}
    S_i = ip_{i-1} + \sum_{k=1}^{i-1}\frac{k^2}{\sigma}p_{k-1}.
\end{equation*}
By Eq.~\ref{eq:id_basic} of Lemma~\ref{lem:identities}, $(k/\sigma)\,p_{k-1} = p_{k-1} - p_k$, so the sum equals
$\sum_{k=1}^{i-1} k(p_{k-1} - p_k)$, which is equal to $\sum_{k=0}^{i-2} p_k - (i-1)p_{i-1}$. Therefore:
\begin{equation}\label{eq:Si_case1}
    S_i = ip_{i-1} + \sum_{k=0}^{i-2} p_k - (i-1)p_{i-1} = \sum_{k=0}^{i-1} p_k.
\end{equation}

\item \emph{Case 2: $i > \sigma$.} Applying Lemma~\ref{lem:p_i_q} (the $q \ge 1$ branch): $S_i = \sum_{k=1}^{\sigma}\frac{k^2}{\sigma}\,p_{k-1}$. Again by Eq.~\ref{eq:id_basic}, the sum equals
$\sum_{k=1}^{\sigma} k\,(p_{k-1} - p_k)$, which gives the following:
\begin{equation}\label{eq:Si_case2}
    S_i = \sum_{k=1}^{\sigma} k(p_{k-1} - p_k) = \sum_{k=0}^{\sigma-1} p_k - \sigma p_\sigma
    = Q(\sigma)
\end{equation}
where the last equality uses $p_\sigma = 0$ (recall Def.~\ref{def:Q_function})
\end{itemize}

Splitting the first sum of Eq.~\ref{eq:H_split} at $i = \sigma$ and using Eq.~\ref{eq:Si_case1} and Eq.~\ref{eq:Si_case2}:
\begin{align}
    \frac{\log\sigma}{\sigma}\sum_{i=1}^{n-1}S_i
    &= \frac{\log\sigma}{\sigma}
    \left(
        \sum_{i=1}^{\sigma}\sum_{k=0}^{i-1}p_k
        + (n-\sigma-1)Q(\sigma)
    \right)
    \nonumber\\
    &= \frac{\log\sigma}{\sigma}
    \left(
        \sum_{k=0}^{\sigma-1}(\sigma-k)p_k
        + (n-\sigma-1)Q(\sigma)
    \right)
    \nonumber\\
    &= \frac{\bigl(nQ(\sigma)-\sigma\bigr)\log\sigma}{\sigma}.
    \label{eq:S_i}
\end{align}
where the last step uses Eq.~\ref{eq:id_kpk} of Lemma~\ref{lem:identities}.

Moving to the second sum of Eq.~\ref{eq:H_split}, i.e., the one running over $D_i$, we proceed similarly. Setting $k = i-q$, we distinguish again between two cases:

\begin{itemize}
\item \emph{Case 1: $1 \le i \le \sigma$.} Applying Lemma~\ref{lem:p_i_q} (the $q = 0$ and $q \ge 1$ branches) and noting that from Eq.~\ref{eq:id_basic} of Lemma~\ref{lem:identities} follows $p_{k-1}(1-k/\sigma) = p_k$:
\begin{equation*}
    D_i = p_i\log\frac{1}{1-i/\sigma} + \sum_{k=1}^{i-1}\frac{k}{\sigma}p_k\log\frac{1}{1-k/\sigma}
\end{equation*}
\item \emph{Case 2: $i > \sigma$.} Applying Lemma~\ref{lem:p_i_q} (the $q \ge 1$ branches) and using the same observation as in the previous case:
\begin{equation*}
    D_i = \sum_{k=1}^{\sigma-1}\frac{k}{\sigma}p_k\log\frac{1}{1-k/\sigma} := C(\sigma)
\end{equation*}
\end{itemize}
Splitting Eq.~\ref{eq:H_split} based on these two cases, we obtain:
\begin{align}
    \sum_{i=1}^{n-1} D_i &= \sum_{i=1}^{\sigma}\left(p_i\log\frac{1}{1-i/\sigma} + \sum_{k=1}^{i-1}\frac{k}{\sigma}p_k\log\frac{1}{1-k/\sigma}\right)+(n-\sigma-1)C(\sigma) \nonumber\\
    &\le 2\sigma C(\sigma) + (n-\sigma-1)C(\sigma) \nonumber\\
    &= nC(\sigma) + (\sigma-1)C(\sigma)\label{eq:d_i}
\end{align}

Where the inequality follows after noting that $\sum_{i=1}^{\sigma}p_i\log{(1/(1-i/\sigma))} \le \sigma\sum_{i=1}^{\sigma-1}\tfrac{i}{\sigma}p_i\log{(1/(1-i/\sigma))} = \sigma C(\sigma)$ (recall $p_{\sigma} = 0$ and we assumed $0\log{1/0} = 0$), and that $\sum_{k=1}^{i-1}\tfrac{k}{\sigma}p_k\log{(1/(1-k/\sigma))} \le C(\sigma)$. 

Combining Eq.~\ref{eq:S_i} and~\ref{eq:d_i} inside Eq.~\ref{eq:H_split} gives

\begin{equation}\label{eq:exact_entropy}
    H(P_A) = \frac{\big(nQ(\sigma) - \sigma\big)\log\sigma}{\sigma}
    + nC(\sigma) + O(\sigma C(\sigma)) \quad \text{bits.}
\end{equation}

Note that, when $\sigma = O(1)$, the coefficient of the leading term in Eq.~\ref{eq:exact_entropy} can be computed exactly, giving the values shown in Sec~\ref{sec:introduction}. Moving to larger alphabet sizes that scale together with the input size, we can obtain a lower bound on the entropy by dropping the positive contribution of Eq.~\ref{eq:d_i}, thus obtaining $H(P_{A}) \ge ((nQ(\sigma) - \sigma)\log{\sigma})/\sigma$ bits. Using Ramanujan's Q-function asymptotic $Q(\sigma) = \sqrt{\pi\sigma/2} - 1/3 + O(\sigma^{-1/2})$ (see~\cite[\S 1.2.11.3, Eq.~25]{Knuth97}) gives

\begin{equation}\label{eq:asympt_entropy}
    H(P_{A}) \ge n\sqrt{\frac{\pi}{2\sigma}}\log\sigma
        - \frac{n\,\log\sigma}{3\sigma}
        - \log\sigma
        + O\!\left(\frac{n\,\log\sigma}{\sigma^{3/2}}\right) \quad \text{bits.}
\end{equation}
Finally, the lower bound on the expected encoding size is $H(P_A)-\log(H(P_A)+1)-\log e$~\cite{AlonO94}. Moreover, $H(P_A)$ is upper-bounded by the worst-case entropy of the
$P$-sequences, which is $2n-O(\log n)$ bits~\cite{FischerL26}. Therefore, $\log(H(P_A)+1)+\log e=O(\log n)$, and this concludes the proof of Thm.~\ref{th:p_sequence_entropy_sigma}.

%In the next sections, we design encodings whose space usage is average-case optimal across all alphabet regimes. This is interesting when $\sigma \to \infty$ as the encodings match the leading term of Eq.~\ref{eq:asympt_entropy} up to lower-order terms. Hence, discarding the (nonnegative) contribution of Eq.~\ref{eq:d_i} costs nothing at the leading term, and the bound is tight in its dominant term.

%We use Taylor's expansion of $\ln(1/(1-x))$, to express $C(\sigma)$ in terms of $Q(\sigma)$.
%\jo{$Q$ does not appear in the following term.}
%\begin{equation*}
%    C(\sigma) := \sum_{k=1}^{\sigma-1}\tfrac{k}{\sigma}\,p_k\log{(1/(1-k/\sigma))}  = \frac{1}{\sigma^2 \ln{2}}\sum_{k=1}^{\sigma-1}k^{2}p_{k}\left(1+O(\sigma^{-1/2})\right)
%\end{equation*}

%By Eq.~\ref{eq:id_kpk} of Lemma~\ref{lem:identities}, $\sum_{k=1}^{\sigma-1}k^{2}p_{k} = \sum_{k=1}^{\sigma-1}k \sigma (p_{k-1}-p_{k}) = (\sigma+1)Q(\sigma)-2\sigma$, and thus $C(\sigma) = O(Q(\sigma)/\sigma)$. Finally, using Ramanujan's Q-function asymptotic $Q(\sigma) = \sqrt{\pi\sigma/2} - 1/3 + O(\sigma^{-1/2})$ (see~\cite[\S 1.2.11.3, Eq.~25]{Knuth97}), we obtain

%\begin{equation}\label{eq:asympt_entropy}
%    H(P) = n\,\sqrt{\frac{\pi}{2\sigma}}\,\log\sigma
%        \;-\; \frac{n\,\log\sigma}{3\sigma}
%        \;-\; \log\sigma
%        \;+\; O\!\left(\frac{n\,\log\sigma}{\sigma^{3/2}}\right).
%\end{equation}
%Overall, combining Eq.~\ref{eq:exact_entropy} and~\ref{eq:asympt_entropy} proves Thm.~\ref{th:p_sequence_entropy_sigma}.

\section{An Optimal Encoding for Constant Alphabet Sizes}\label{sec:enc_const_alphabet}
We use a folklore variant of Huffman coding known as block Huffman coding. This allows us to approach the entropy of the underlying sequence arbitrarily closely. The tradeoff is that we must build one or several potentially large Huffman trees. However, as we demonstrate below, this cost remains manageable when $\sigma = O(1)$ and the block size is chosen carefully.

Given a block size $b \ge 1$ (to be fixed later), and assuming for simplicity that $b \mid n$, we partition $P_A$ into $n/b$ consecutive blocks, each of size $b$. For $0\le i< n/b$, let $B_i=P_{A}[i\,b],\dots,P_{A}[(i+1)\,b-1]$ denote the content of the $i$-th block. We treat each block as a single symbol, with probability determined by the uniform model on $A$ given by Lemma~\ref{lem:p_i_cond} rather than by the empirical frequencies of $P_A$, conditioned on the last value of the preceding block (if any):
%with a probability distribution conditioned on the last value of the previous block (if it exists): 
\begin{equation*}
    \mathbb{P}(B_i \mid P_{A}[ib-1]) = \prod_{k=ib}^{(i+1)b-1} \mathbb{P}(P_{A}[k]\mid P_{A}[k-1])
\end{equation*}
This formula follows from the same machinery of the proof of Thm.~\ref{th:p_sequence_entropy_sigma} and that $P_{A}[0] = 0$ deterministically, which applies to the first block. 

We then build a separate Huffman tree for each possible preceding value. Since $i-\sigma+1 \le P_{A}[i] \le i$, we only require $\sigma$ different trees, selected according to the value of $K_i = i-P_{A}[i]$ (assume $K_0 = 0$ for the first block).

Let $\ell_i$ denote the expected codeword length of the $i$-th block. By the Huffman coding bound~\cite[Sec.~5.6]{cover1999elements}, and the same argument we used for Eq.~\ref{eq:cond_entropy} in Sec.~\ref{sec:entropy}:
\begin{equation*}
    \ell_i < H(B_i \mid P_{A}[ib-1]) + 1 = \left(\sum_{k=ib}^{(i+1)b-1}H(P_{A}[k]\mid P_{A}[k-1])\right) + 1
\end{equation*}

Summing over all blocks, because of Eq.~\ref{eq:entropy}, the expected size in bits of the entropy-coded sequence can be bounded as follows:
\begin{equation}\label{eq:entropy_enc_const_sigma}
    \sum_{i=0}^{n/b-1}\ell_i < \sum_{i=0}^{n/b-1}\left(\left(\sum_{k=ib}^{(i+1)b-1}H(P_{A}[k]\mid P_{A}[k-1])\right)+1\right) = H(P_{A}) + n/b
\end{equation}

To randomly access any position of such a sequence (and thereby answer \textsc{All-Distinct} queries), we store the following additional data structures:
\begin{itemize}
    \item A select data structure $S_{P_A}$ supporting $O(1)$ time queries, built over the starting positions of the blocks in the encoded sequence. This requires $(n/b)\log{b} + O(n/b)$ bits, e.g., using Elias-Fano codes~\cite[§3.4.3]{Navarro16book}.
    
    \item For each block $1\le i < n/b$, the value $K_{ib-1} \in \{0,\dots,\sigma-1\}$ needed to select the correct Huffman tree (or table, see later), stored in an array $B_{P_A}$ using $(n/b)\log\sigma = O(n/b)$ bits since $\sigma = O(1)$.
    
    \item A precomputed lookup table $T_K[\,\cdot\,][\,\cdot\,]$ for each $K \in \{0,\dots,\sigma-1\}$. Given a codeword $c$, the entry $T_{K}[c][j] \in \{0,\dots,\sigma-1\}$ returns the relative value $K_{ib+j}$ of the $j$-th symbol of the block whose preceding state is $K$ and whose codeword is $c$; the decoded value is then recovered as $P_{A}[ib+j] = (ib+j) - K_{ib+j}$. These tables occupy $O(\sigma(2^{\ell_{max}} b\log{\sigma})) = O(2^{\ell_{max}} b)$ bits (recall $\sigma=O(1)$ here), where $\ell_{max}$ is the maximum codeword length.
\end{itemize}

Considering the size of each lookup table, it is well-known that the maximum codeword length produced by Huffman coding is bounded by $\log_{\phi}{(1/p_{\text{min}})}$, where $\phi \approx 1.6180$ is the \emph{golden ratio} and $p_{\text{min}}$ is the smallest probability in the underlying distribution~\cite{Navarro16book}. In our case, considering only blocks of positive probability, every conditional probability $\mathbb{P}(P_{A}[k] = p\mid P_{A}[k-1] = q)$ is at least $1/\sigma$. By Lemma~\ref{lem:p_i_cond}, it equals $1/\sigma$ when $p>q$, and $1-(k-q)/\sigma$ when $p=q$. The latter case is positive only when $k-q\leq\sigma-1$, and hence its probability is also at least $1/\sigma$. Therefore, $p_{\text{min}} \ge 1/\sigma^{b}$, and consequently $\ell_{max} \le b\log_{\phi}\sigma \approx 1.4404b\log{\sigma}$ bits. Setting $b = \tfrac{\log{\phi}}{2}\log_{\sigma}{n} \approx 0.3471\log_{\sigma}{n}$, each lookup table uses $O(\sqrt{n}\log{n})$ bits. Under the same choice of $b$, the space usage of the auxiliary data structures is dominated by the $O((n\log\log{n})/\log{n})$ bits of $S_{P_A}$. Therefore, the overall expected space usage is $((nQ(\sigma) -\sigma)\log\sigma)/\sigma + nC(\sigma) + O((n\log{\log{n}})/\log{n})$ bits, matching the lower bound of Thm.~\ref{th:p_sequence_entropy_sigma} up to lower-order terms whenever $\sigma = O(1)$. Some values of the leading term are reported in Sec.~\ref{sec:introduction}.

%Combining this with Eq.~\ref{eq:entropy_enc_const_sigma} and Thm.~\ref{th:p_sequence_entropy_sigma}
%\begin{equation*}
%    \frac{\big(n\,Q(\sigma) - \sigma\big)\,\log\sigma}{\sigma}
%    \;+\; n\,C(\sigma) + O\!\left(\frac{n\log{\log{n}}}{\log{n}}\right) \quad\text{bits.}
%\end{equation*}

%For instance, for $\sigma \in \{2, 3, 4, 5\}$ the encoding uses on average roughly $n$, $1.3627n$, $1.5153n$, and $1.5824n$ bits, respectively. 

%Let $k = \lfloor i/b\rfloor$ be the block index. Read the previous state $K = B_{P_A}[k]$ and the block's starting position $s = S_{P_A}[k]$, all in $O(1)$ time.

Accessing $P_{A}[i]$ (and thus answering an \textsc{All-Distinct} query) is straightforward. Let $k = \lfloor i/b\rfloor$, we read the previous state $K = B_{P_A}[k]$ and the block's starting position $s = S_{P_A}[k]$, all in $O(1)$ time. Extract the $b\log_{\phi}\sigma = \tfrac{1}{2}\log n$ bits starting at position $s$ into a single machine word~\footnote{Padding with $\tfrac{1}{2}\log{n}$ bits at the end ensures that we never read outside of the encoded sequence.}. Let $d$ be its numerical value, then $T_K[d][\,i \bmod b\,] = K_i$, and thus we return $P_{A}[i] = i - K_i$ in $O(1)$ time.

This concludes the proof of the first branch of Thm.~\ref{th:opt_encodings_all} for $\sigma = O(1)$. The same encoding also extends to $\sigma = O(\polylog{n})$; however, in the next section we present an alternative data structure for this regime with a smaller lower-order term, namely, $O(n/\log n)$ instead of $O((n\log\log n)/\log n)$.
%Finally, although this encoding can match the lower bound of Thm.~\ref{th:p_sequence_entropy_sigma} even for $\sigma = O(\polylog{n})$, in the next section we present an alternative data structure with a smaller lower-order term, namely, $O(n/\log{n})$ instead of $O((n\log{\log{n}})/\log{n})$. This concludes the proof of the first branch of Thm.~\ref{th:opt_encodings_all} for $\sigma = O(1)$.

\section{An Optimal Encoding for General Alphabet Sizes}\label{sec:enc_gen_alphabet}

We begin by proving the following property of the $P$-sequences.

\begin{lemma}\label{lem:average_runs}
Let $A$ be an array of $n$ elements drawn independently and uniformly at
random from $[1, \sigma]$. Let $r$ denote the number of runs of identical elements in $P_A$. Assuming $\sigma \to \infty$, for large enough $n$, $\mathbb{E}[r] = n\sqrt{\pi/(2\sigma)}(1+o(1))$. 
\end{lemma}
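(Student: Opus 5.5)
The number of runs in $P_A$ equals one plus the number of indices $1 \le i \le n-1$ with $P_A[i] \neq P_A[i-1]$. Since $P_A$ is non-decreasing, a change at $i$ means exactly that $P_A[i] > P_A[i-1]$. By linearity of expectation we then get
\begin{equation*}
  \mathbb{E}[r] = 1 + \sum_{i=1}^{n-1} \mathbb{P}\big(P_A[i] \neq P_A[i-1]\big),
\end{equation*}
and the task reduces to computing each summand with Lemma~\ref{lem:p_i_q} and Lemma~\ref{lem:p_i_cond}.

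Conditioning on $P_A[i-1]=q$ and setting $k=i-q$, Lemma~\ref{lem:p_i_cond} gives a change with probability $k/\sigma$. Summing against Lemma~\ref{lem:p_i_q} yields
\begin{equation*}
  \mathbb{P}\big(P_A[i]\neq P_A[i-1]\big) = \frac{1}{\sigma}\sum_{q=0}^{i-1}\mathbb{P}(P_A[i-1]=q)\,(i-q) = \frac{S_i}{\sigma},
\end{equation*}
which is exactly the quantity $S_i$ from Eq.~\ref{eq:H_split}. We can therefore reuse Eq.~\ref{eq:Si_case1} and Eq.~\ref{eq:Si_case2} and the summation in Eq.~\ref{eq:S_i}. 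This gives the exact identity
\begin{equation*}
  \mathbb{E}[r] = 1 + \frac{nQ(\sigma)-\sigma}{\sigma} = \frac{n\,Q(\sigma)}{\sigma}.
\end{equation*}
(The identity uses $\sigma \le n-1$ so that the split at $i=\sigma$ is valid. If $\sigma \ge n$, the partial sums $\sum_{k<i}p_k$ are still at most $Q(\sigma)$, and the same argument goes through with minor bookkeeping, but we stay in the regime $\sigma\le n$ of the paper.)

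For the asymptotics, as $\sigma\to\infty$ we apply Ramanujan's expansion $Q(\sigma)=\sqrt{\pi\sigma/2}-1/3+O(\sigma^{-1/2})$ already used for Eq.~\ref{eq:asympt_entropy}. This gives $\mathbb{E}[r] = n\sqrt{\pi/(2\sigma)} - n/(3\sigma) + O(n\sigma^{-3/2})$. The correction terms are $o(n/\sqrt{\sigma})$, so $\mathbb{E}[r]=n\sqrt{\pi/(2\sigma)}(1+o(1))$.

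The only delicate point is the bookkeeping of the boundary term. For $i\le\sigma$ the $q=0$ branch of Lemma~\ref{lem:p_i_q} contributes $i\,p_{i-1}$ rather than $(i/\sigma)p_{i-1}$, and this is what produces the $-\sigma$ correction via Eq.~\ref{eq:id_kpk}. Since Eq.~\ref{eq:S_i} already handles this correctly, I expect no real obstacle. The one thing to check is that the lower-order error is genuinely $o(1)$ relative to the main term, which holds because the ratio is $O(1/\sqrt{\sigma})$.
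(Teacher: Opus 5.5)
Your proposal is correct and follows the paper's proof step for step. You write $r$ as one plus the number of run starts, condition on $P_A[i-1]=q$ via Lemma~\ref{lem:p_i_cond} to recover $\frac{1}{\sigma}\sum_i S_i$, reuse Eq.~\ref{eq:S_i} to obtain $\mathbb{E}[r]=1+(nQ(\sigma)-\sigma)/\sigma=nQ(\sigma)/\sigma$, and finish with Ramanujan's asymptotic for $Q(\sigma)$. (Your caveat about needing $\sigma\le n-1$ is harmless but unnecessary: at $\sigma=n$ the identity still holds exactly, because $S_\sigma=Q(\sigma)$, so the ``$-1$'' copy in $(n-\sigma-1)Q(\sigma)$ cancels the missing $i=\sigma$ term.)
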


\begin{proof}
For every $1 \le i < n$, a run of identical elements in $P_{A}$ begins whenever $P_{A}[i] > P_{A}[i-1]$. Let $X_i = \mathbf{1}\{P_{A}[i] > P_{A}[i-1]\}$ denote the indicator variable for such an event, then $r = 1 + \sum_{i=1}^{n-1}X_i$, and taking the expected value gives:
\begin{align*}
    \mathbb{E}[r] &= 1+\sum_{i=1}^{n-1}\mathbb{P}(P_{A}[i] > P_{A}[i-1]) \quad \text{(Conditioning on $P_{A}[i-1]$)} \\[2pt]
    &= 1+\sum_{i=1}^{n-1}\sum_{q=0}^{i-1}\mathbb{P}(P_{A}[i] > q\, |\, P_{A}[i-1] = q)\mathbb{P}(P_{A}[i-1] = q) \quad \text{(Lemma~\ref{lem:p_i_cond})} \\[2pt]
    &= 1+\sum_{i=1}^{n-1}\sum_{q=0}^{i-1}\mathbb{P}(P_{A}[i-1] = q)\frac{i-q}{\sigma} \quad \text{(Eq.~\ref{eq:H_split} proof of Thm.~\ref{th:p_sequence_entropy_sigma})} \\[2pt]
    &= 1+\frac{nQ(\sigma)-\sigma}{\sigma} \quad \text{(Ramanujan's asymptotic (see~\cite[\S 1.2.11.3, Eq.~25]{Knuth97})} \\[2pt]
    &= n\sqrt{\frac{\pi}{2\sigma}}(1+o(1))
\end{align*}
\end{proof}

Using Lemma~\ref{lem:average_runs}, we design two encodings matching the lower bound of Thm.~\ref{th:p_sequence_entropy_sigma}. The first one gives $O(1)$ expected query time for every $\sigma = \omega(1)$. The second one gives different worst-case query time tradeoffs depending on the alphabet size.

\paragraph{Constant Expected Query Time.}

We split $P_A$ into blocks of size $\sqrt{\sigma}$, assuming for simplicity that $\sqrt{\sigma} \mid n$. The resulting representation consists of three components. $R_{P_A}$ stores the concatenation of the starting position of every run inside a block, each as an offset relative to the start of the block, and thus encoded in $\log{\sqrt{\sigma}}$ bits. $S_{P_A}$ stores the $P_A$ value at the beginning of each run, and $C_{P_A}[i]$ stores the number of runs beginning before block $i$. We encode both $S_{P_A}$ and $C_{P_A}$ using Elias--Fano codes~\cite[\S3.4.3]{Navarro16book}. The overall expected space usage in bits is
\begin{equation*}
    \mathbb{E}\left[r\log{\frac{n}{r}}+r\log{\sqrt{\sigma}}+\frac{n}{\sqrt{\sigma}}\log{\frac{r\sqrt{\sigma}}{n}}+O\!\left(r+\frac{n}{\sqrt{\sigma}}\right)\right] \le n\sqrt{\frac{\pi}{2\sigma}}\log\sigma
        + O\!\left(\frac{n}{\sqrt{\sigma}}\right)
\end{equation*}
Each term in the expectation is a concave function of $r$: $r\log(n/r)$ is concave, $r\log\sigma$ is linear, $(n/\sqrt{\sigma})\log(r\sqrt{\sigma}/n)$ is concave in $r$, and the $O(r)$ and $O(n/\sqrt{\sigma})$ terms are linear or constant in $r$. By Jensen's inequality, $\mathbb{E}[f(r)] \le f(\mathbb{E}[r])$ for each concave term, and $\mathbb{E}[r] = n\sqrt{\pi/(2\sigma)}(1+o(1))$ by Lemma~\ref{lem:average_runs}. Substituting gives the stated bound. The space then matches the lower bound of Thm.~\ref{th:p_sequence_entropy_sigma} up to lower-order terms as long as $\sigma=\omega(1)$. Accessing $P_{A}[i]$ is straightforward. Let $j = \lfloor i/\sqrt{\sigma}\rfloor$ and $s = i \bmod \sqrt{\sigma}$, we compute $t$, the number of offsets in $R_{P_A}[j]$ smaller or equal than $s$, by a linear scan. The index of the run containing position $i$ is then $k = C_{P_{A}}[j]+t$, and $P_{A}[i] = S_{P_A}[k]$. From Lemma~\ref{lem:average_runs}, the average number of runs per block is $\sqrt{\pi/2} = O(1)$, therefore, accessing $P_A$ and thus answering an \textsc{All-Distinct} query takes $O(1)$ expected time.

\paragraph{Worst-Case Query Time.}
We store the starting position of each run of identical elements in $P_A$ along with their corresponding values. Let these two sequences be $R_{P_A}$ and $S_{P_A}$ respectively. Accessing a given value of $P_A$ at position $i$ requires a predecessor operation on $R_{P_A}$ to locate the run containing $i$, and then accessing the given position in $S_{P_A}$. One possibility is then to store both $R_{P_A}$ and $S_{P_A}$ using a Fully-Indexable Dictionary (FID), providing both the predecessor and access operations. The expected space usage of this encoding is
\begin{equation*}
    2\mathbb{E}\left[\log\binom{n}{r} + R(n,r,t)\right] \le n\sqrt{\frac{\pi}{2\sigma}}\log\sigma
     + O\!\left(\frac{n}{\sqrt{\sigma}}\right) + 2\mathbb{E}\left[R(n,r,t)\right]\quad\text{bits,}
\end{equation*}
where the upper bound follows by first applying the well-known inequality $\log\binom{n}{r}\leq r\log(ne/r)$, and then Jensen's inequality together with Lemma~\ref{lem:average_runs}. Here, $R(n,r,t)$ denotes the redundancy of the FID over the information-theoretic lower bound for answering queries in $O(t)$ time while storing a sequence of $r$ elements from a universe of size $n$. Noting that $O(n/\sqrt{\sigma}) = o((n/\sqrt{\sigma})\log{\sigma})$ for any $\sigma=\omega(1)$, this would match the lower bound of Thm.~\ref{th:p_sequence_entropy_sigma} up to lower-order terms as long as $\mathbb{E}[R(n,r,t)] = o((n/\sqrt{\sigma})\log{\sigma})$. It is well known that, depending on the interplay between the universe size $n$, the number of elements $r$, and the query time $t$, FIDs give different space-time tradeoffs. Since the expected value of $r$ depends on the alphabet size $\sigma$ and the sequence length $n$, we distinguish between the following regimes, matching the lower bound of Thm.~\ref{th:p_sequence_entropy_sigma} on expectation while delivering different worst-case query times (see  Appendix~\ref{app:missing_proofs_enc} for the full details).

%hence we focus on this redundancy term.

\begin{itemize}
    \item $\sigma = O(\polylog{n})$. We use Pătrașcu's FID~\cite{Patrascu08}, which supports predecessor and access in $O(t)$ time for any $t \ge 1$. The resulting encoding can be made to have $O(1)$ worst-case query time.

    \item $\sigma = n^{o(1)}$. We use Gupta et al.~\cite[Thm.~2]{GuptaHSV07} FID supporting access and predecessor in $O(\log{\log{r}})$ and $AT(n,r)$ time, where $AT(n,r)$ is Andersson and Thorup's predecessor data structure time bound~\cite{AnderssonT00,GuptaHSV07}. Our encoding has $o(\log^{2}{\log{n}})$ worst-case query time.
    %\vspace{0.5em}

    \item $\sigma = \Theta(n^{\beta})$ for any fixed constant $\beta \in (0,1]$. We use the FID of Liang \& Zhou~\cite[Thm.~1]{LiangZ25}, supporting access and predecessor in $O(\log{\log{r}})$ time for any $t = O(1)$. Therefore, our encoding solves queries in $O(\log{\log{n}})$ worst-case time.
\end{itemize}

Combining these encodings with the one of Sec.~\ref{sec:enc_const_alphabet} proves Thm.~\ref{th:opt_encodings_all}. 

%Because of the page limit, details about construction time and space are deferred to Appendix~\ref{app:construction}.

\section{A Cell-Probe Space-Time Indexing Lower Bound}\label{sec:cell_probe_lb}
%In the cell-probe model, the memory is a sequence of cells, each storing a $w$-bit string, where $w$ is a parameter of the model; we assume $w = \Theta(\log n)$. The cost of an algorithm is measured solely by the number of memory probes, each consisting of reading or writing $w$ bits, while all other computation is free.
%Since our problem is static, we assume that every probe corresponds to a read. The memory accesses of any algorithm can therefore be represented as a decision tree in which each node is labeled by a memory address and corresponds to a probe at that location. Each node has $2^w$ outgoing edges, one for each possible $w$-bit value that can be read, and the cost of an algorithm in this model is thus given by the depth of its corresponding decision tree. 
In the cell-probe model, the memory is a sequence of cells, each storing a $w$-bit string, where $w$ is a parameter of the model; we assume $w = \Theta(\log n)$. The cost of an algorithm is measured by the number of memory probes, each consisting of reading or writing $w$ bits, while all other computation is free. For static problems, every probe corresponds to a read. The memory accesses of any algorithm can therefore be represented as a decision tree in which each node is labeled by a memory address and corresponds to a probe at that location. Each node has $2^w$ outgoing edges, each labeled with one of the possible $w$-bit values that can be read. We assume w.l.o.g. that the edges are sorted by increasing label. The cost of an algorithm is then given by the depth of its corresponding decision tree.

We give a cell-probe lower bound for the indexing version of \textsc{All-Distinct} using the technique of~\cite{Golynski07,BrodalDR12}. In their framework, a stronger assumption is made. Probes into the index are assumed to be free, and the cost of a computation is measured only by the number of probes to the underlying input array. To this end, we begin by restricting ourselves to the following class of instances:

\begin{definition}\label{def:cell_probe_seq}
%Let $n$ and $b$ be two integers such that $1 < b \le n$ and $b \mid n$. The set $\mathcal{C}$ is defined as the set of arrays $A$ of size $n$ over an alphabet of size $b+1$, where for each $0\le i< n/b$ the block $A[ib, (i+1)b-1]$ contains exactly two occurrences of the symbol $1$. The remaining $b-2$ positions are filled by setting any other position $k$ to the value $(k \bmod b) + 2$.

Let $n$ and $b$ be two integers such that $1 < b \le n$ and $b \mid n$. $\mathcal{C}$ is the set of arrays $A$ of size $n$ over an alphabet of size $b+1$, where for each $0\le i< n/b$ the block $A[ib, (i+1)b-1]$ contains exactly two occurrences of the symbol $1$. The remaining positions are filled by setting any other index $k$ to $(k \bmod b) + 2$.
\end{definition}

We now observe that, given a data structure $\mathcal{D}$ built on an unknown instance $I \in \mathcal{C}$, one can reconstruct any block of $I$ by issuing at most $2(\log b+1)$ \textsc{All-Distinct} queries. To see this, consider a generic block $A[ib, (i+1)b-1]$, and the sequence of answers to \textsc{All-Distinct} queries in which the left endpoint is fixed at $ib$ while the right endpoint ranges from $ib+1$ to $(i+1)b-1$. The resulting sequence of answers is a non-increasing binary sequence, transitioning from $1$ to $0$ as soon as the unique duplicated pair is included in the queried range. It follows that the rightmost position $r_i$ of the duplicated pair within the block can be identified via binary search using at most $\log b + 1$ \textsc{All-Distinct} queries to $\mathcal{D}$. By a symmetric argument, the sequence of answers to \textsc{All-Distinct} queries in which the right endpoint is fixed at $r_i$ and the left endpoint ranges from $r_i - 1$ to $ib$ is again a non-increasing binary sequence, transitioning from $1$ to $0$ as soon as the leftmost position $l_i$ of the duplicated pair is included in the queried range. Therefore, by issuing at most $2(\log b + 1)$ \textsc{All-Distinct} queries to $\mathcal{D}$, one can uniquely reconstruct any of the $n/b$ blocks of $I$.

Suppose that an indexing data structure $\mathcal{D}$ uses $n/b$ bits, thus the number of distinct such structures is $2^{n/b}$. Since $|\mathcal{C}| = \binom{b}{2}^{n/b}$, by the pigeonhole principle, there is at least one data structure shared by at least $(\binom{b}{2}/2)^{n/b} = (b(b-1)/4)^{n/b}$ inputs from $\mathcal{C}$. Denote by $\mathcal{C}_{\mathcal{D}} \subseteq \mathcal{C}$ the set of such inputs.

\begin{figure}[ht]
    \centering
    \includegraphics[width=\linewidth]{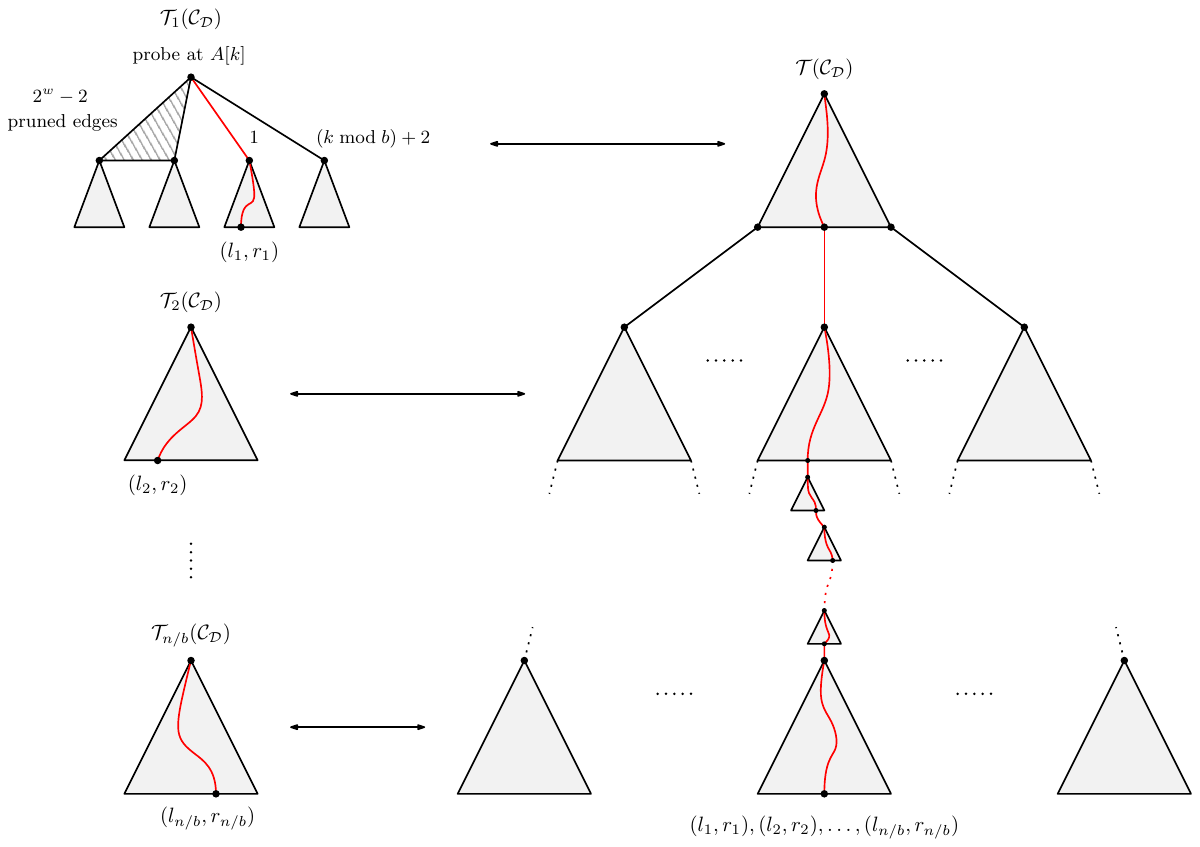}
    \caption{Construction of the composed decision tree $\mathcal{T}(\mathcal{C}_{\mathcal{D}})$ used in the lower bound of Thm.~\ref{th:cell_probe_lb}. Left: each per-block decision tree $\mathcal{T}_{i}(\mathcal{C}_{\mathcal{D}})$ reconstructs the positions $(l_i,r_i)$ of the unique duplicated pair in the $i$-th block using at most $2(\log b+1)$ \textsc{All-Distinct} queries. Each internal node is labeled by a probe at some position $A[k]$. By Def.~\ref{def:cell_probe_seq}, only two outcomes are possible: $1$ and $(k\bmod b)+2$; the remaining $2^w-2$ outgoing edges are unreachable, and therefore pruned. Right: $\mathcal{T}(\mathcal{C}_{\mathcal{D}})$ is obtained by substituting a copy of $\mathcal{T}_{i+1}(\mathcal{C}_{\mathcal{D}})$ at each leaf of $\mathcal{T}_{i}(\mathcal{C}_{\mathcal{D}})$. Thus, every root-to-leaf path encodes a unique sequence $(l_1,r_1),\ldots,(l_{n/b},r_{n/b})$ of duplicated-pair positions.}
    \label{fig:decison_trees_composition}
\end{figure}

%Suppose that an indexing data structure $\mathcal{D}$ uses $n/b$ bits of space, hence the number of distinct such structures is $2^{n/b}$. Since $|\mathcal{C}| = \binom{b}{2}^{n/b}$, by the pigeonhole principle, there is at least one data structure shared by at least $(\binom{b}{2}/2)^{n/b} = (b(b-1)/4)^{n/b}$ inputs from $\mathcal{C}$. Denote by $\mathcal{C}_{\mathcal{D}} \subseteq \mathcal{C}$ the set of such inputs.

Let us now consider the decision tree $\mathcal{T}_{i}(\mathcal{C_{\mathcal{D}}})$ corresponding to the algorithm described above, which reconstructs the $i$-th block of a given instance $I \in \mathcal{C_{\mathcal{D}}}$ (see Figure~\ref{fig:decison_trees_composition}). 

Although a probe into $A$ can in principle read any of the $2^{w}$ possible values, $\mathcal{T}_i(\mathcal{C}_{\mathcal{D}})$ can be pruned into a binary tree. To see this, consider a probe at $A[k]$, and suppose that previous probes inside the block revealed at most a single $1$ position~\footnote{If both occurrences of $1$ within the block have already been located, the probe value is fully determined, leaving a single outgoing edge.}. Because of Def.~\ref{def:cell_probe_seq}, the probe at position $k$ either returns $1$ or returns the value $(k \bmod b) + 2$, making all other $2^{w}-2$ outgoing edges unreachable. 

We proceed by composing all $n/b$ decision trees into a single binary tree $\mathcal{T}(\mathcal{C_{\mathcal{D}}})$ as follows (see Figure~\ref{fig:decison_trees_composition}). Each leaf of $\mathcal{T}_1(\mathcal{C_{\mathcal{D}}})$ is replaced by a copy of $\mathcal{T}_2(\mathcal{C_{\mathcal{D}}})$; the resulting leaves are then replaced by copies of $\mathcal{T}_3(\mathcal{C_{\mathcal{D}}})$, and so on. Every leaf of the resulting tree $\mathcal{T}(\mathcal{C_{\mathcal{D}}})$ is then labeled with its unique sequence $(l_1, r_1), (l_2, r_2), \ldots, (l_{n/b}, r_{n/b})$ of positions of the duplicated symbols across all $n/b$ blocks. Finally, we apply a last pruning step to $\mathcal{T}(\mathcal{C_{\mathcal{D}}})$, in which unreachable nodes are removed (e.g., any node reached via a probe returning a value inconsistent with a previous probe at the same position) and nodes with a single outgoing edge are subsequently collapsed. Notice that, as a result, no repeated probes are performed.

Now, since every two distinct input arrays in $\mathcal{C_{\mathcal{D}}}$, correspond to different leaves in $\mathcal{T}(\mathcal{C_{\mathcal{D}}})$, the number of leaves $L$ of $\mathcal{T}(\mathcal{C_{\mathcal{D}}})$ is at least $(b(b-1)/4)^{n/b}$, i.e., the minimum number of inputs in $\mathcal{C}_{D}$. Furthermore, every root-to-leaf path in $\mathcal{T}(\mathcal{C}_{D})$ can be interpreted as a binary string of length $d$, where $d$ is the depth of $\mathcal{T}(\mathcal{C}_{D})$. This is because we either traverse a left edge corresponding to a probe in which a $1$ is read, or we traverse the other edge. Because of our construction, each of these paths contains at most $2n/b$ left edges. By padding these sequences with further $0$s and $1$s to a total length of $d+2n/b$, we can ensure that they contain exactly $2n/b$ $1$s. Since there are at most $\binom{d+2n/b}{2n/b}$ such sequences it follows that $L \ge (b(b-1)/4)^{n/b}$ and $L \le \binom{d+2n/b}{2n/b}$. We now relate these upper and lower bounds on $L$ to derive a bound on $d$. Taking logarithms and applying the standard inequality $\log\binom{m}{k} \le k\log(\frac{me}{k})$, we obtain:
\begin{align*}
\frac{n}{b}\log{\left(\frac{b(b-1)}{4}\right)} \le \log\binom{d+2n/b}{2n/b} \le \frac{2n}{b}\log\left(\frac{(d+2n/b)e}{2n/b}\right)
\end{align*}
%Dividing both sides by $n/b$, exponentiating in base $2$, and rearranging the left-hand-side gives:
Which we further simplify into the following:
\begin{align*}
\frac{b^2}{4}\left(1-\frac{1}{b}\right) \le \left(\frac{(d+2n/b)e}{2n/b}\right)^{2}
\end{align*}

%Since there are at most $\binom{d+2n/b}{2n/b}$ such sequences, we obtained the following lower and upper bounds on $L$:

%\begin{equation*}
%    \left(\frac{b(b-1)}{4}\right)^{n/b} \le L \le \binom{d+2n/b}{2n/b}
%\end{equation*}

Therefore, $d \ge \frac{n}{e}\sqrt{1-1/b} - 2n/b = \Omega(n)$. Since at most $2(n/b)(\log{b}+1)$ \textsc{All-Distinct} queries are used to reconstruct any instance from $\mathcal{C}_\mathcal{D}$, by the pigeonhole principle, there is at least one such instance for which $\Omega(d/((n/b)\log{b})) = \Omega(b/\log{b})$ probes into the underlying array are needed, thus proving Thm.~\ref{th:cell_probe_lb}.

%Therefore, $d \ge \frac{n}{e}\sqrt{1-1/b} - 2n/b$, and thus $d = \Omega(n)$. Since at most $2(n/b)(\log{b}+1)$ \textsc{All-Distinct} queries are needed to reconstruct any instance from $\mathcal{C}_\mathcal{D}$, by the pigeonhole principle, there is at least one such instance for which $\Omega(d/(n/b)\log{b}) = \Omega(b/\log{b})$ probes into the underlying array are needed. Overall, this proves Thm.~\ref{th:cell_probe_lb}.
%and shows that the index in~\cite {FischerL26} does not yield the optimal space-time tradeoff.

\section{A More Space-Efficient Index}\label{sec:new_index}
The index of Fischer \& Lari~\cite{FischerL26} partitions the array into $n/b$ consecutive blocks of size $b \ge 1$, storing samples $S_{P_A}$ and $S_{N_A}$ of $P_{A}$ and $N_{A}$ at the end and beginning of each block, respectively. Both sequences are accessed in $O(1)$ time and stored in $O((n/b)\log{b})$ bits (see~\cite{FischerL26} for the details). Given a query $\textsc{All-Distinct}(l,r)$, if $r - l + 1 \le b$, they directly solve an $O(b)$-sized element distinctness instance in $T_{dist}(b,\sigma)$ time. Otherwise, let $[l',r']$ be the maximal part of $[l,r]$ that perfectly overlaps with blocks; they verify that it does not introduce duplicates in $[l,r]$ by checking whether $S_{P}[r'] < l$ and $S_{N}[l']>r$. If this is the case, an element in $[l,l'-1]$ can still occur in $[r'+1,r]$ (and vice versa), hence the query is answered by solving an $O(b)$-sized element distinctness instance on $[l,l'-1] \cup [r'+1,r]$ in $T_{dist}(b,\sigma)$ time.

We improve on this with a simple observation. Since the query time is dominated by $T_{dist}(b,\sigma)$, handling the aligned part in $O(1)$ time is unnecessary for any $b = \omega(1)$. We proceed by showing that slowing down this step saves an $O(\log{b})$ factor in space.

For every block $0 \le i < n/b$, we store in $D_{P_A}$ the index of the block containing $p = P_{A}[(i+1)b-1]$, and in $C_{P_A}$ the block index in $A[p, (i+1)b-1]$ containing the duplicate of $A[p-1]$. $D_{N_A}$ and $C_{N_A}$ are defined similarly for $N_A$, but sampling at the first position of each block. Given a query $\textsc{All-Distinct}(l,r)$, suppose $r-l+1 > b$ and that $[l',r']$ is the maximal part of $[l,r]$ overlapping with blocks, we proceed as follows:

\begin{itemize}
    \item $D_{P_A}[\lfloor r'/b \rfloor] > \lfloor l/b \rfloor$: the query is immediately false.
    \item $D_{P_A}[\lfloor r'/b \rfloor] = \lfloor l/b \rfloor$: let $c = C_{P_A}[\lfloor r'/b \rfloor]$. If $c > \lfloor l/b \rfloor$, solve an $O(b)$-sized element distinctness instance on $A[l,l'-1] \cup A[cb, (c+1)b-1]$; if $c = \lfloor l/b \rfloor$, solve it on $A[l,l'-1]$ alone.
    %\item $D_{P_A}[\lfloor r'/b \rfloor] = \lfloor l/b \rfloor$: let $c = C_{P_A}[\lfloor r'/b \rfloor]$, we solve an $O(b)$-sized element distinctness instance on $A[l,l'-1] \cup A[cb, (c+1)b-1]$.
    \item $D_{P_A}[\lfloor r'/b \rfloor] < \lfloor l/b \rfloor$: $[l,r']$ contains no duplicates.
\end{itemize}

The same check is done for $[l', r]$ using $D_{N_A}$ and $C_{N_A}$. In case of success, we answer the query by verifying that $A[l, l'-1] \cup A[r'+1, r]$ does not contain any duplicates in $O(T_{dist}(b,\sigma))$ time. 

Because of their definition, $D_{P_A}$ and $D_{N_A}$ are non-decreasing sequences. The same property holds for $C_{P_A}$ and $C_{N_A}$, but it is less obvious. Focus on $C_{P_A}$, as the same idea applies to $C_{N_A}$. Suppose there are two blocks $i < j$ such that $C_{P_A}[i] > C_{P_A}[j]$. Because of Def~\ref{def:p_n_array}, $P_{A}[(j+1)b-1] \ge P_{A}[(i+1)b-1]$, thus the only possibility is to have $P_{A}[(j+1)b-1] = P_{A}[(i+1)b-1]$, otherwise, we would have $C_{P_A}[i] < C_{P_A}[j]$. However, since $C_{P_A}[i] > C_{P_A}[j]$ and $p = P_{A}[(j+1)b-1] = P_{A}[(i+1)b-1]$, there is another occurrence of $A[p-1]$ before block $C_{P_A}[i]$ and after $p$, which is a contradiction. Because of their monotonicity, we can encode each sequence as a bit vector by storing from left to right the negated unary representation of the differences between consecutive elements. For instance, consider $D_{P_A}$, and let $\Delta_{D_{P}}[i] = D_{P}[i]-D_{P}[i-1]$ for $1\le i< n/b$ with $\Delta_{D_P}[0] = 0$. We store the bit vector $B_{D_P} = 0^{\Delta_{D_P}[0]}10^{\Delta_{D_P}[1]}1\dots0^{\Delta_{D_P}[n/b-1]}1$. Now, the size of $B_{D_P}$ is at most $2(n/b)-1$ bits, since it contains $n/b$ ones and the sum of zeros telescopes to $D_{P_A}[n/b-1] \le n/b-1$. Since $D_{P_A}[i] = \textsc{select}_{1}(B_{D_P}, i+1)-i$, every element of $D_{P_A}$ is accessed in $O(1)$ time by adding $o(n/b)$ bits on top of $B_{D_P}$~\cite{Navarro16book}. Overall, the four sequences $D_{P_A}$, $C_{P_A}$, $D_{N_A}$, and $C_{N_A}$ are stored in $8n/b+o(n/b)$ bits and accessed in $O(1)$ time. Considering the lower bound of Thm.~\ref{th:cell_probe_lb}, using an optimal comparison-based sorting algorithm, the query time is $O(b\log{b})$, which is only an $O(\log^{2}{b})$ factor away from the optimal time within our space usage.

\section{Conclusions}\label{sec:conclusions}
We studied the data structure version of the element distinctness problem introduced by Fischer \& Lari~\cite{FischerL26}, complementing their results in both the encoding and indexing models. In the encoding model, we derived a lower bound for uniformly random inputs and designed average-case optimal encodings matching this bound up to lower-order terms. Our results cover the full alphabet range $1\leq\sigma\leq n$, with worst-case query times ranging from $O(1)$ to $O(\log\log n)$; for $\sigma = \omega(1)$, we additionally give an encoding with $O(1)$ expected query time. In the indexing model, we proved a cell-probe space-time tradeoff lower bound of $\Omega(b/\log b)$ on the query time for any index using $n/b$ bits, and gave a simple index whose query time is within an $O(\log^2 b)$ factor of this bound. A natural direction for future work is to close this gap, either by strengthening the lower bound or by improving the index.

\bibliographystyle{splncs04}
\bibliography{bib}

@article{AlonO94,
  author       = {Noga Alon and
                  Alon Orlitsky},
  title        = {A lower bound on the expected length of one-to-one codes},
  journal      = {{IEEE} Trans. Inf. Theory},
  volume       = {40},
  number       = {5},
  pages        = {1670--1672},
  year         = {1994},
  doi          = {10.1109/18.333891},
}

@article{DBLP:journals/mst/ChanDLMW14,
  author       = {Timothy M. Chan and
                  Stephane Durocher and
                  Kasper Green Larsen and
                  Jason Morrison and
                  Bryan T. Wilkinson},
  title        = {Linear-Space Data Structures for Range Mode Query in Arrays},
  journal      = {Theory Comput. Syst.},
  volume       = {55},
  number       = {4},
  pages        = {719--741},
  year         = {2014},
  url          = {https://doi.org/10.1007/s00224-013-9455-2},
  doi          = {10.1007/S00224-013-9455-2},
}

@inproceedings{DBLP:conf/isaac/BrodalJ09,
  author       = {Gerth St{\o}lting Brodal and
                  Allan Gr{\o}nlund J{\o}rgensen},
  title        = {Data Structures for Range Median Queries},
  booktitle    = {Algorithms and Computation, 20th International Symposium, {ISAAC}
                  2009, Honolulu, Hawaii, USA, December 16-18, 2009. Proceedings},
  series       = {Lecture Notes in Computer Science},
  volume       = {5878},
  pages        = {822--831},
  publisher    = {Springer},
  year         = {2009},
  url          = {https://doi.org/10.1007/978-3-642-10631-6\_83},
  doi          = {10.1007/978-3-642-10631-6\_83}
}

@inproceedings{DBLP:conf/wea/CarmonaL26,
  author       = {Gabriel Carmona and
                  Filippo Lari},
  title        = {Compressing Highly Repetitive Binary Trees with an Application to
                  Range Minimum Queries},
  booktitle    = {24th International Symposium on Experimental Algorithms, {SEA} 2026,
                  Copenhagen, Denmark, June 22-24, 2026},
  series       = {LIPIcs},
  volume       = {371},
  pages        = {10:1--10:20},
  publisher    = {Schloss Dagstuhl - Leibniz-Zentrum f{\"{u}}r Informatik},
  year         = {2026},
  url          = {https://doi.org/10.4230/LIPIcs.SEA.2026.10},
  doi          = {10.4230/LIPICS.SEA.2026.10},
}

@misc{VanDerHoogRR2025-ElementDistinctness,
      title={Tight Better-Than-Worst-Case Bounds for Element Distinctness and Set Intersection}, 
      author={Ivor van der Hoog and Eva Rotenberg and Daniel Rutschmann},
      year={2025},
      eprint={2511.02954},
      archivePrefix={arXiv},
      primaryClass={cs.DS},
      url={https://arxiv.org/abs/2511.02954}, 
}

@inproceedings{bille2025dynamic,
  title={Dynamic range minimum queries on the ultra-wide word RAM},
  author={Bille, Philip and G{\o}rtz, Inge Li and P{\'e}rez L{\'o}pez, M{\'a}ximo and Stordalen, Tord},
  booktitle={International Conference on Current Trends in Theory and Practice of Computer Science},
  pages={122--135},
  year={2025},
  organization={Springer}
}

@article{DBLP:journals/ipl/PetersenG09,
  author       = {Holger Petersen and
                  Szymon Grabowski},
  title        = {Range mode and range median queries in constant time and sub-quadratic
                  space},
  journal      = {Inf. Process. Lett.},
  volume       = {109},
  number       = {4},
  pages        = {225--228},
  year         = {2009},
  url          = {https://doi.org/10.1016/j.ipl.2008.10.007},
  doi          = {10.1016/J.IPL.2008.10.007},
}

@incollection{DBLP:reference/algo/Fischer16,
  author    = {Johannes Fischer},
  title     = {Compressed Range Minimum Queries},
  booktitle = {Encyclopedia of Algorithms},
  publisher = {Springer},
  pages     = {379--382},
  year      = {2016},
  url       = {https://doi.org/10.1007/978-1-4939-2864-4\_640},
  doi       = {10.1007/978-1-4939-2864-4\_640},
}

@inproceedings{DBLP:conf/sofsem/Petersen08a,
  author       = {Holger Petersen},
  editor       = {Viliam Geffert and
                  Juhani Karhum{\"{a}}ki and
                  Alberto Bertoni and
                  Bart Preneel and
                  Pavol N{\'{a}}vrat and
                  M{\'{a}}ria Bielikov{\'{a}}},
  title        = {Improved Bounds for Range Mode and Range Median Queries},
  booktitle    = {{SOFSEM} 2008: Theory and Practice of Computer Science, 34th Conference
                  on Current Trends in Theory and Practice of Computer Science, Nov{\'{y}}
                  Smokovec, Slovakia, January 19-25, 2008, Proceedings},
  series       = {Lecture Notes in Computer Science},
  volume       = {4910},
  pages        = {418--423},
  publisher    = {Springer},
  year         = {2008},
  url          = {https://doi.org/10.1007/978-3-540-77566-9\_36},
  doi          = {10.1007/978-3-540-77566-9\_36},
}

@article{PETERSEN2009225,
title = {Range mode and range median queries in constant time and sub-quadratic space},
journal = {Information Processing Letters},
volume = {109},
number = {4},
pages = {225-228},
year = {2009},
issn = {0020-0190},
doi = {https://doi.org/10.1016/j.ipl.2008.10.007},
author = {Holger Petersen and Szymon Grabowski},
}

@article{DBLP:journals/tcs/BrodalGJS11,
  author       = {Gerth St{\o}lting Brodal and
                  Beat Gfeller and
                  Allan Gr{\o}nlund J{\o}rgensen and
                  Peter Sanders},
  title        = {Towards optimal range medians},
  journal      = {Theor. Comput. Sci.},
  volume       = {412},
  number       = {24},
  pages        = {2588--2601},
  year         = {2011},
  url          = {https://doi.org/10.1016/j.tcs.2010.05.003},
  doi          = {10.1016/J.TCS.2010.05.003},
}

@article{DBLP:journals/njc/KrizancMS05,
  author       = {Danny Krizanc and
                  Pat Morin and
                  Michiel H. M. Smid},
  title        = {Range Mode and Range Median Queries on Lists and Trees},
  journal      = {Nord. J. Comput.},
  volume       = {12},
  number       = {1},
  pages        = {1--17},
  year         = {2005},
}

@article{DBLP:journals/siamcomp/FischerH11,
  author    = {Johannes Fischer and
               Volker Heun},
  title     = {Space-Efficient Preprocessing Schemes for Range Minimum Queries on
               Static Arrays},
  journal   = {{SIAM} J. Comput.},
  volume    = {40},
  number    = {2},
  pages     = {465--492},
  year      = {2011},
  url       = {https://doi.org/10.1137/090779759},
  doi       = {10.1137/090779759},
  bibsource = {dblp computer science bibliography, https://dblp.org}
}

@article{DBLP:journals/siamcomp/BerkmanV93,
  author       = {Omer Berkman and
                  Uzi Vishkin},
  title        = {Recursive Star-Tree Parallel Data Structure},
  journal      = {{SIAM} J. Comput.},
  volume       = {22},
  number       = {2},
  pages        = {221--242},
  year         = {1993},
  url          = {https://doi.org/10.1137/0222017},
  doi          = {10.1137/0222017},
}

@inproceedings{bender2000lca,
  title={The LCA problem revisited},
  author={Bender, Michael A and Farach-Colton, Martin},
  booktitle={LATIN 2000: Theoretical Informatics: 4th Latin American Symposium, Punta del Este, Uruguay, April 10-14, 2000 Proceedings 4},
  pages={88--94},
  year={2000},
  organization={Springer}
}

@inproceedings{DBLP:conf/wea/BaumstarkGHL17,
  author       = {Niklas Baumstark and
                  Simon Gog and
                  Tobias Heuer and
                  Julian Labeit},
  title        = {Practical Range Minimum Queries Revisited},
  booktitle    = {16th International Symposium on Experimental Algorithms, {SEA} 2017,
                  June 21-23, 2017, London, {UK}},
  series       = {LIPIcs},
  volume       = {75},
  pages        = {12:1--12:16},
  publisher    = {Schloss Dagstuhl - Leibniz-Zentrum f{\"{u}}r Informatik},
  year         = {2017},
  url          = {https://doi.org/10.4230/LIPIcs.SEA.2017.12},
  doi          = {10.4230/LIPICS.SEA.2017.12},
  bibsource    = {dblp computer science bibliography, https://dblp.org}
}

@article{DBLP:journals/jda/FerradaN17,
  author       = {H{\'{e}}ctor Ferrada and
                  Gonzalo Navarro},
  title        = {Improved Range Minimum Queries},
  journal      = {J. Discrete Algorithms},
  volume       = {43},
  pages        = {72--80},
  year         = {2017},
  url          = {https://doi.org/10.1016/j.jda.2016.09.002},
  doi          = {10.1016/J.JDA.2016.09.002},
  bibsource    = {dblp computer science bibliography, https://dblp.org}
}

@inproceedings{DBLP:conf/stoc/GabowBT84,
  author       = {Harold N. Gabow and
                  Jon Louis Bentley and
                  Robert Endre Tarjan},
  title        = {Scaling and Related Techniques for Geometry Problems},
  booktitle    = {Proceedings of the 16th Annual {ACM} Symposium on Theory of Computing,
                  April 30 - May 2, 1984, Washington, DC, {USA}},
  pages        = {135--143},
  publisher    = {{ACM}},
  year         = {1984},
  url          = {https://doi.org/10.1145/800057.808675},
  doi          = {10.1145/800057.808675},
}

@inproceedings{LiangZ25,
  author       = {Jingxun Liang and
                  Renfei Zhou},
  title        = {Optimal Static Fully Indexable Dictionaries},
  booktitle    = {52nd International Colloquium on Automata, Languages, and Programming,
                  {ICALP} 2025, Aarhus, Denmark, July 8-11, 2025},
  series       = {LIPIcs},
  volume       = {334},
  pages        = {114:1--114:20},
  publisher    = {Schloss Dagstuhl - Leibniz-Zentrum f{\"{u}}r Informatik},
  year         = {2025},
  doi          = {10.4230/LIPICS.ICALP.2025.114},
}

@inproceedings{FerraginaL25,
  author       = {Paolo Ferragina and
                  Filippo Lari},
  editor       = {Paola Bonizzoni and
                  Veli M{\"{a}}kinen},
  title        = {{FL-RMQ:} {A} Learned Approach to Range Minimum Queries},
  booktitle    = {36th Annual Symposium on Combinatorial Pattern Matching, {CPM} 2025,
                  June 17-19, 2025, Milan, Italy},
  series       = {LIPIcs},
  volume       = {331},
  pages        = {7:1--7:23},
  publisher    = {Schloss Dagstuhl - Leibniz-Zentrum f{\"{u}}r Informatik},
  year         = {2025},
  url          = {https://doi.org/10.4230/LIPIcs.CPM.2025.7},
  doi          = {10.4230/LIPICS.CPM.2025.7},
}

@inproceedings{AnderssonT00,
  author       = {Arne Andersson and
                  Mikkel Thorup},
  editor       = {F. Frances Yao and
                  Eugene M. Luks},
  title        = {Tight(er) worst-case bounds on dynamic searching and priority queues},
  booktitle    = {Proceedings of the Thirty-Second Annual {ACM} Symposium on Theory
                  of Computing, May 21-23, 2000, Portland, OR, {USA}},
  pages        = {335--342},
  publisher    = {{ACM}},
  year         = {2000},
  url          = {https://doi.org/10.1145/335305.335344},
  doi          = {10.1145/335305.335344},
}

@InProceedings{FischerL26,
  author =	{Fischer, Johannes and Lari, Filippo},
  title =	{{Indexing and Encoding Arrays for Element Distinctness Queries}},
  booktitle =	{37th Annual Symposium on Combinatorial Pattern Matching (CPM 2026)},
  pages =	{9:1--9:17},
  series =	{Leibniz International Proceedings in Informatics (LIPIcs)},
  ISBN =	{978-3-95977-420-8},
  ISSN =	{1868-8969},
  year =	{2026},
  volume =	{369},
  editor =	{Bille, Philip and Prezza, Nicola},
  publisher =	{Schloss Dagstuhl -- Leibniz-Zentrum f{\"u}r Informatik},
  address =	{Dagstuhl, Germany},
  URN =		{urn:nbn:de:0030-drops-259350},
  doi =		{10.4230/LIPIcs.CPM.2026.9},
}

@article{GuptaHSV07,
  author       = {Ankur Gupta and
                  Wing{-}Kai Hon and
                  Rahul Shah and
                  Jeffrey Scott Vitter},
  title        = {Compressed data structures: Dictionaries and data-aware measures},
  journal      = {Theor. Comput. Sci.},
  volume       = {387},
  number       = {3},
  pages        = {313--331},
  year         = {2007},
  url          = {https://doi.org/10.1016/j.tcs.2007.07.042},
  doi          = {10.1016/J.TCS.2007.07.042},
}

@book{Navarro16book,
  author       = {Gonzalo Navarro},
  title        = {Compact Data Structures - {A} Practical Approach},
  publisher    = {Cambridge University Press},
  year         = {2016},
  isbn         = {978-1-10-715238-0},
}

@article{BrodalDR12,
  author       = {Gerth St{\o}lting Brodal and
                  Pooya Davoodi and
                  S. Srinivasa Rao},
  title        = {On Space Efficient Two Dimensional Range Minimum Data Structures},
  journal      = {Algorithmica},
  volume       = {63},
  number       = {4},
  pages        = {815--830},
  year         = {2012},
  url          = {https://doi.org/10.1007/s00453-011-9499-0},
  doi          = {10.1007/S00453-011-9499-0},
  bibsource    = {dblp computer science bibliography, https://dblp.org}
}

@article{Flajolet95,
title = {On Ramanujan's Q-function},
journal = {Journal of Computational and Applied Mathematics},
volume = {58},
number = {1},
pages = {103-116},
year = {1995},
issn = {0377-0427},
doi = {https://doi.org/10.1016/0377-0427(93)E0258-N},
author = {Philippe Flajolet and Peter J. Grabner and Peter Kirschenhofer and Helmut Prodinger},
}

@article{DBLP:journals/tcs/JoS26,
  author       = {Seungbum Jo and
                  Srinivasa Rao Satti},
  title        = {Encodings for range minimum queries over bounded alphabets},
  journal      = {Theor. Comput. Sci.},
  volume       = {1070},
  pages        = {115824},
  year         = {2026},
  url          = {https://doi.org/10.1016/j.tcs.2026.115824},
  doi          = {10.1016/J.TCS.2026.115824},
}

@book{cover1999elements,
  title={Elements of information theory},
  author       = {Thomas M. Cover and
                  Joy A. Thomas},
  year={1999},
  publisher={John Wiley \& Sons}
}

@article{Golynski07,
  author       = {Alexander Golynski},
  title        = {Optimal lower bounds for rank and select indexes},
  journal      = {Theor. Comput. Sci.},
  volume       = {387},
  number       = {3},
  pages        = {348--359},
  year         = {2007},
  url          = {https://doi.org/10.1016/j.tcs.2007.07.041},
  doi          = {10.1016/J.TCS.2007.07.041},
}

@book{Knuth97,
  author       = {Donald Ervin Knuth},
  title        = {The art of computer programming, Volume {I:} Fundamental Algorithms,
                  3rd Edition},
  publisher    = {Addison-Wesley},
  year         = {1997},
  isbn         = {0201896834},
}

@inproceedings{Patrascu08,
  author       = {Mihai P{\u{a}}tra{\c{s}}cu},
  title        = {Succincter},
  booktitle    = {49th Annual {IEEE} Symposium on Foundations of Computer Science, {FOCS}
                  2008, October 25-28, 2008, Philadelphia, PA, {USA}},
  pages        = {305--313},
  publisher    = {{IEEE} Computer Society},
  year         = {2008},
  doi          = {10.1109/FOCS.2008.83},
}

\appendix

\section{Missing Proofs from Section~\ref{sec:entropy}}\label{app:missing_proofs_entropy}

For ease of reference, we restate every lemma and proceed with its proof.

\begin{lemma}\label{lem:p_i_q_app}
Let $A$ be an array of $n$ elements drawn independently and uniformly at random from $[1,\sigma]$, and let $p_k := \prod_{j=1}^{k}(1 - j/\sigma)$ for $k \ge 0$, with $p_0 = 1$ and $p_k = 0$ for $k \ge \sigma$. For every $1 \le i \le n-1$ and $0 \le q \le i-1$,
\begin{equation*}
    \mathbb{P}(P_{A}[i-1] = q) \;=\;
    \begin{cases}
        p_{i-1}                 & q = 0,\; i \le \sigma, \\[2pt]
        \tfrac{(i-q)}{\sigma}p_{(i-q)-1}     & q \ge 1,\; 1 \le i-q \le \sigma, \\[2pt]
        0                       & \text{otherwise.}
    \end{cases}
\end{equation*}
\end{lemma}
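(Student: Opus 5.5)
We need $\mathbb{P}(P_A[i-1]=q)$. The plan is to translate the event $\{P_A[i-1]=q\}$ into a statement about distinctness of a window of $A$ ending at position $i-1$, and then count. Set $m := i-q$, the length of the window $A[q,i-1]$. By Def.~\ref{def:p_n_array}, $P_A[i-1]=q$ holds iff $A[q,i-1]$ consists of distinct symbols and, when $q\ge 1$, the extended window $A[q-1,i-1]$ does not. Since $A[q,i-1]$ is distinct, the latter condition is equivalent to $A[q-1]\in\{A[q],\dots,A[i-1]\}$. When $q=0$ there is no left extension, so the event reduces to $A[0,i-1]$ being all distinct.

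\emph{Case $q=0$.} The event is that the $i$ i.i.d.\ uniform symbols $A[0],\dots,A[i-1]$ are pairwise distinct. Revealing them left to right, the $(j+1)$-th symbol avoids the $j$ previous ones with probability $1-j/\sigma$, and these probabilities multiply by independence. The result is $\prod_{j=1}^{i-1}(1-j/\sigma)=p_{i-1}$. This is automatically $0$ when $i>\sigma$, since then $i-1\ge\sigma$ and $p_{i-1}=0$, consistent with the ``otherwise'' branch.

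\emph{Case $q\ge1$.} The window $A[q,i-1]$ has $m$ symbols, which are distinct with probability $p_{m-1}$ by the same argument. The symbol $A[q-1]$ is independent of this window and uniform, so conditionally on any distinct window it lands among the $m$ values with probability exactly $m/\sigma$. Multiplying gives $\tfrac{m}{\sigma}p_{m-1}$. If $m>\sigma$, pigeonhole makes the window non-distinct, giving probability $0$; formally $p_{m-1}=0$ for $m-1\ge\sigma$. This matches the stated range $1\le i-q\le\sigma$, and $m\ge1$ always holds since $q\le i-1$.

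The only step needing care is the conditioning in the case $q\ge1$. The key point is that $A[q-1]$ lies strictly outside $A[q,i-1]$, so independence applies directly. As a consistency check, I would verify via Eq.~\ref{eq:id_basic} of Lemma~\ref{lem:identities} that the probabilities sum to $1$ over $q$: the terms $\tfrac{k}{\sigma}p_{k-1}=p_{k-1}-p_k$ telescope, and adding the $q=0$ term $p_{i-1}$ gives total $p_0=1$. This check is optional, and the argument above suffices for the lemma.
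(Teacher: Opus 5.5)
Your proposal is correct and follows essentially the paper's proof: you use the same characterization of $\{P_A[i-1]=q\}$ (the window $A[q,i-1]$ is distinct and, for $q\ge 1$, $A[q-1]$ collides with it), and you multiply sequential probabilities using independence where the paper counts the $\sigma(\sigma-1)\cdots(\sigma-(i-q)+1)\cdot(i-q)\cdot\sigma^{q-1}$ favorable arrays over $\sigma^{i}$. You additionally make the \emph{otherwise} branch explicit ($p_{i-1}=0$ for $i>\sigma$, and pigeonhole for $i-q>\sigma$), which the paper leaves implicit.
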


\begin{proof}
The event $P_{A}[i-1] = q$ requires that $A[q,i-1]$ are distinct and, if $q \ge 1$, that $A[q-1]$ collides with some element of $A[q,i-1]$.

For $q = 0$ (which requires $i \le \sigma$, else $A[0\,..\,i-1]$ cannot be all distinct), the prefix $A[0\,..\,i-1]$ consists of $i$ distinct values, in $\sigma(\sigma-1)\cdots(\sigma-i+1)$ ways out of $\sigma^i$:
\begin{equation*}
    \mathbb{P}(P_{A}[i-1] = 0) \;=\; \frac{\sigma(\sigma-1)\cdots(\sigma-i+1)}{\sigma^i} \;=\; p_{i-1}.
\end{equation*}

For $q \ge 1$ and $1 \le  i - q \le \sigma$, the $i-q$ values $A[q,i-1]$ can be chosen in $\sigma(\sigma-1)\cdots(\sigma-(i-q)+1)$ ways; the element $A[q-1]$ must equal one of these $i-q$ values, accounting for $(i-q)$ other choices; the prefix $A[0,q-2]$ can be any of the $\sigma^{q-1}$ possible configurations. Dividing by $\sigma^i$:
\begin{equation*}
    \mathbb{P}(P_{A}[i-1] = q)
    \;=\; \frac{\sigma(\sigma-1)\cdots(\sigma-(i-q)+1)\,(i-q)}{\sigma^{(i-q)+1}}
    \;=\; \frac{(i-q)}{\sigma}\,p_{(i-q)-1}
\end{equation*}

\end{proof}

\begin{lemma}\label{lem:p_i_cond_app}
Let $A$ be an array of $n$ elements drawn independently and uniformly at random from $[1,\sigma]$. For every $1 \le i \le n-1$ and every $0 \le q \le i-1$ with $i - q \le \sigma$,
\begin{equation*}
    \mathbb{P}(P_{A}[i] = p \mid P_{A}[i-1] = q) \;=\;
    \begin{cases}
        1 - (i - q)/\sigma & p = q, \\[2pt]
        1/\sigma           & q < p \le i, \\[2pt]
        0                  & \text{otherwise.}
    \end{cases}
\end{equation*}
\end{lemma}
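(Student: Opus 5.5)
We will prove the statement by a direct structural argument on how $P_A[i]$ is obtained from $P_A[i-1]$, followed by a counting argument using the independence of $A[i]$ from $A[0,i-1]$.

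\emph{Structure.} Suppose $P_A[i-1]=q$. By definition, $A[q,i-1]$ consists of $i-q$ distinct symbols, and if $q\ge 1$ then $A[q-1]$ occurs in $A[q,i-1]$.

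If $A[i]\notin\{A[q],\dots,A[i-1]\}$, then $A[q,i]$ is all distinct. If moreover $q\ge1$, then $A[q-1,i]$ contains the duplicate already present in $A[q-1,i-1]$. Hence $P_A[i]=q$.

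If instead $A[i]=A[j]$ for the unique $j\in[q,i-1]$ with this value, then every range $[j',i]$ with $j'\le j$ contains two copies of $A[i]$. On the other hand, $A[j+1,i]$ is distinct, being a sub-range of the distinct range $A[q,i-1]$ extended by a symbol not among $A[j+1,i-1]$. Hence $P_A[i]=j+1\in\{q+1,\dots,i\}$.

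The map from the value of $A[i]$ to $P_A[i]$ is therefore:
\begin{itemize}
\item $\sigma-(i-q)$ symbols give $P_A[i]=q$;
\item each of the $i-q$ symbols already in the window gives a distinct $p\in(q,i]$.
\end{itemize}
No other value of $p$ is possible, which covers the ``otherwise'' case.

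\emph{Probabilistic step.} The event $\{P_A[i-1]=q\}$ is determined by $A[0,i-1]$. Since $A[i]$ is uniform on $[1,\sigma]$ and independent of $A[0,i-1]$, we can condition on any fixed realisation of $A[0,i-1]$ in the event. For each such realisation, the structural step shows:
\begin{itemize}
\item $\mathbb{P}(P_A[i]=q)=(\sigma-(i-q))/\sigma=1-(i-q)/\sigma$;
\item $\mathbb{P}(P_A[i]=p)=1/\sigma$ for each $q<p\le i$, since exactly one symbol maps to each such $p$.
\end{itemize}
These values do not depend on the realisation, so averaging over the event $\{P_A[i-1]=q\}$ gives the same conditional probabilities. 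This requires $\mathbb{P}(P_A[i-1]=q)>0$, which holds under the hypothesis $i-q\le\sigma$ by Lemma~\ref{lem:p_i_q_app}. When $i-q=\sigma$, the first case correctly gives probability $0$.

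\emph{Main obstacle.} The only delicate point is the case analysis showing that $P_A[i]$ equals exactly $j+1$ and not something smaller. This needs two facts: the window $A[q,i-1]$ is duplicate-free, and for $q\ge1$ the left boundary is fixed by the duplicate involving $A[q-1]$. The boundary case $q=0$ must also be handled, where there is no $A[q-1]$ and the claim $P_A[i]=0$ in the non-colliding case follows because the range is $A[0,i]$ itself.
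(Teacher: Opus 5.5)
Your proof is correct and follows the same route as the paper. Both condition on $P_A[i-1]=q$, note that $A[q,i-1]$ is duplicate-free and that $A[i]$ is independent and uniform, and map each of the $i-q$ colliding symbols to a distinct $p=j+1\in\{q+1,\dots,i\}$ and the remaining $\sigma-(i-q)$ symbols to $p=q$. You also spell out details the paper leaves implicit: why the left boundary stays at $q$ via the duplicate involving $A[q-1]$, why $j+1$ is exactly the minimum, and why the conditioning event has positive probability.
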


\begin{proof}
Given $P_{A}[i-1] = q$, the suffix $A[q,i-1]$ consists of $i - q$ distinct elements of $[1,\sigma]$, and $A[i]$ is independent and
uniformly distributed on $[1,\sigma]$.

If $A[i]$ coincides with $A[j]$ for some $q \le j \le i-1$, then $P_{A}[i] = j + 1$, which lies in $\{q+1, \dots, i\}$; each of the $i - q$ matches has probability $1/\sigma$ and corresponds to a distinct value of $P_{A}[i]$. If instead $A[i]$ differs from every $A[j]$ with $j \in \{q, \dots, i-1\}$, then $A[q,i]$ are all distinct and $P_{A}[i] = q$; this happens with probability $1 - (i-q)/\sigma$.
\end{proof}

\begin{lemma}\label{lem:identities_app}
Let $p_k := \prod_{j=1}^{k}(1 - j/\sigma)$ for $k \ge 0$, with $p_0 = 1$ and $p_k = 0$ for $k \ge \sigma$. The following identities hold for every $\sigma \ge 1$:
\begin{align}
    p_{k-1} - p_k &\;=\; \tfrac{k}{\sigma}\,p_{k-1},
        \qquad 1 \le k \le \sigma, \label{eq:id_basic_app} \\[4pt]
    \sum_{k=0}^{\sigma-1} k\,p_k &\;=\; \sigma - Q(\sigma). \label{eq:id_kpk_app}
\end{align}
\end{lemma}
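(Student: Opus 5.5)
The first identity is immediate from the definition of $p_k$. For $1 \le k \le \sigma$ with $k-1 \le \sigma-1$, $p_{k-1}$ is the genuine product $\prod_{j=1}^{k-1}(1-j/\sigma)$. Multiplying by the next factor $(1-k/\sigma)$ gives $p_k = p_{k-1}(1-k/\sigma)$, hence $p_{k-1}-p_k = \tfrac{k}{\sigma}p_{k-1}$. The only case needing care is the convention $p_k = 0$ for $k \ge \sigma$. At $k=\sigma$ the product formula itself gives $p_\sigma = p_{\sigma-1}(1-\sigma/\sigma) = 0$, so the convention agrees with the product. The identity therefore holds verbatim at $k=\sigma$, reducing to $p_{\sigma-1} = p_{\sigma-1}$. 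I will state this check explicitly so that Eq.~\ref{eq:id_basic_app} covers the full range $1\le k\le\sigma$.

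For the second identity, I plan to rewrite $k\,p_k$ via the first identity applied at index $k+1$. This gives $p_k - p_{k+1} = \tfrac{k+1}{\sigma}p_k$, valid for $0\le k\le \sigma-1$, so that $k\,p_k = \sigma(p_k - p_{k+1}) - p_k$. Summing over $k=0,\dots,\sigma-1$, the first part telescopes:
\begin{equation*}
\sum_{k=0}^{\sigma-1} k\,p_k = \sigma\,(p_0 - p_\sigma) - \sum_{k=0}^{\sigma-1} p_k = \sigma(1-0) - Q(\sigma),
\end{equation*}
using $p_0=1$, $p_\sigma=0$, and the definition of $Q(\sigma)$ in Def.~\ref{def:Q_function}. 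This is exactly Eq.~\ref{eq:id_kpk_app}.

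There is no real obstacle here. The only point to watch is the boundary: the first identity must be invoked up to $k+1=\sigma$, which is why its validity at $k=\sigma$ was checked above. The degenerate case $\sigma=1$ should also be confirmed. There $Q(1)=p_0=1$ and the left-hand side is $0\cdot p_0 = 0 = 1 - 1$, consistent with the general argument.
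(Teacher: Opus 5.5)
Your proof is correct and follows essentially the same route as the paper: the first identity comes from the recurrence $p_k = p_{k-1}(1-k/\sigma)$, and the second from $(k+1)p_k = \sigma(p_k - p_{k+1})$ telescoping to $\sigma(p_0 - p_\sigma) = \sigma$. Your explicit check that the convention $p_\sigma = 0$ agrees with the product formula at $k=\sigma$ is a useful detail that the paper leaves implicit.
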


\begin{proof}
Eq.~\ref{eq:id_basic_app} follows from $p_k = p_{k-1}\,(1 - k/\sigma)$, i.e., the recurrence implicit in the
definition of $p_k$. Multiplying both sides by $\sigma$ gives
$(k+1)\,p_k = \sigma\,(p_k - p_{k+1})$ after a shift of index. Summing
from $k = 0$ to $k = \sigma - 1$ telescopes the right-hand side to
$\sigma\,(p_0 - p_\sigma) = \sigma$, hence $\sum_{k=0}^{\sigma-1}(k+1)p_k = \sigma$, giving:
\begin{equation*}
    \sum_{k=0}^{\sigma-1} k\,p_k
    \;=\; \sum_{k=0}^{\sigma-1} (k+1)\,p_k \;-\; \sum_{k=0}^{\sigma-1} p_k
    \;=\; \sigma \;-\; Q(\sigma),
\end{equation*}
which is Eq.~\ref{eq:id_kpk_app}.
\end{proof}

\section{Selecting the FIDs in Section~\ref{sec:enc_gen_alphabet}}\label{app:missing_proofs_enc}

\subsection{Polylogarithmic Alphabet Sizes ($\sigma = O(\polylog{n})$)}\label{subsec:polylog_alphabet}
We implement the FIDs of our encoding in Sec.~\ref{sec:enc_gen_alphabet} using Pătrașcu's data structure~\cite{Patrascu08}, whose redundancy is $n/(\log{n}/t)^{t} + \Tilde{O}(n^{3/4})$ bits with $O(t)$ (worst-case) query time. Noting that the redundancy does not depend on the random variable $r$, we proceed as follows. When $t = O(1)$, $\Tilde{O}(n^{3/4}) = o(n/\log^{t}{n})$, the redundancy is just $O(n/\log^{t}{n})$ bits. To match the lower bound of Thm.~\ref{th:p_sequence_entropy_sigma}, we need $O(n/\log^{t}{n}) = o((n/\sqrt{\sigma})\log{\sigma})$, which holds whenever $\sigma = O(\log^{2t}{n})$. Therefore, since we can set $t$ to any positive constant, the expected space usage of our encoding is optimal for any $\sigma = O(\polylog{n})$, and the worst-case query time is $O(1)$.

Lastly, it is possible to show that the redundancy of Pătrașcu's FID is dominated by the $\Tilde{O}(n^{3/4})$ term when $t = \Omega(\log{n}/\log{\log{n}})$. In this case, $\Tilde{O}(n^{3/4}) = o((n/\sqrt{\sigma})\log{\sigma})$ when $\sigma = o(\sqrt{n})$. Therefore, it is outside the current regime of $\sigma$, and in Appendix~\ref{subsec:poly_alphabet} we also design an even better solution for such a case.

\subsection{Sub-polynomial Alphabet Sizes ($\sigma = n^{o(1)}$)}\label{subsec:subpoly_alphabet}

We use Gupta et al.~\cite[Thm.~2]{GuptaHSV07} FID to implement our encoding in Sec.~\ref{sec:enc_gen_alphabet}. Its space usage is $gap+O(r\log{(n/r)}/\log{r}+r\log{\log{(n/r)}})$ bits, where $gap$ is the well-known data-aware measure obtained by summing the logarithms of the distances (i.e., gaps) between consecutive values in the underlying sequence. By Jensen's inequality $gap \le \log{\binom{n}{r}}+O(r)$, therefore it fits the analysis of Sec.~\ref{sec:enc_gen_alphabet}. Focusing on the redundancy, we notice that the first term $r\log{(n/r)}/\log{r}$ is not always a concave function of $r$, however, it has an inflection point (passing from convex to concave) at $r = e^2$, which can be computed through derivatives. Let $f(r) = r\log{(n/r)}/\log{r}$ and $\ell(r)$ be the tangent line of $f(r)$ at $r = e^2$. We decompose $f(r)$ as $f(r) = f_1(r) + f_2(r)$, where $f_1(r) = \min\{f(r),\ell(r)\}$ and $f_2(r) = \max\{f(r)-\ell(r),0\}$. By linearity of the expected value, $\mathbb{E}[f(r)] = \mathbb{E}[f_1(r)] + \mathbb{E}[f_2(r)]$. $f_2(r)$ is non-zero for $r < e^2$, and thus $\mathbb{E}[f_2(r)] = O(\log{n})$. $f_1(r)$ is concave on the whole range of $r$, by Jensen's inequality $\mathbb{E}[f_1(r)] \le f_1(\mathbb{E}[r])$. By Lemma~\ref{lem:average_runs} and for large enough $n$, $\mathbb{E}[r] > e^2$ and thus $\mathbb{E}[f_1(r)] \le f(\mathbb{E}[r]) = O(\frac{n\log{\sigma}}{\sqrt{\sigma}\log{(n/\sqrt{\sigma})}})$. Overall, in the current regime of $\sigma = n^{o(1)}$, $\mathbb{E}[f(r)] = o((n/\sqrt{\sigma})\log{\sigma})$.

The other $O(r\log{\log{(n/r)}})$ term is instead always a concave function of $r$, thus by Jensen's inequality and Lemma~\ref{lem:average_runs}, its expected value is $O((n/\sqrt{\sigma})\log{\log{\sigma}})$. Overall, for the current regime of $\sigma = n^{o(1)}$, the expected value of the redundancy is always $o((n/\sqrt{\sigma})\log{\sigma})$, and thus the encoding matches the lower bound of Thm.~\ref{th:p_sequence_entropy_sigma}. Considering the query time, Gupta et al. FID supports access in $O(\log{\log{r}})$ time and predecessor in $AT(n,r)$ time, where $AT(n,r)$ is Andersson and Thorup's predecessor data structure time bound~\cite{AnderssonT00,GuptaHSV07}, and is defined as follows:

\begin{equation*}
    AT(n,r) = O\left(\min\left\{\sqrt{\frac{\log{r}}{\log{\log{r}}}},\, \frac{\log{\log{n}}}{\log{\log{\log{n}}}}\cdot\log{\log{r}},\,\log{\log{r}}+\frac{\log{r}}{\log{\log{n}}}\right\}\right)
\end{equation*}

$AT(n,r)$ is then bounded by $o(\log^{2}{\log{n}})$ in the worst-case, and thus our encoding inherits the same query time.

Lastly, we show why Pătrașcu's FID is not well-suited in this case. Noting that its redundancy is not a function of $r$ and focusing on super-constant query time, $\Tilde{O}(n^{3/4}) = o(n/(\log{n}/t)^{t})$ when $t = O(\log{n}/\log{\log{n}})$, so the redundancy is $O(n/(\log n/t)^t)$ bits. Matching the lower bound of Thm.~\ref{th:p_sequence_entropy_sigma} requires $n/(\log n/t)^t = o((n/\sqrt{\sigma})\log{\sigma})$, that is, $(\log{n}/t)^{t} > \frac{\sqrt{\sigma}}{\log{\sigma}}$, which, taking logarithms, translates to
\begin{equation}\label{eq:ineq_app}
    t(\log{\log{n}}-\log{t}) > \tfrac{1}{2}\log{\sigma}-\log{\log{\sigma}}.
\end{equation}
To simplify the calculations, since $\tfrac{1}{2}\log{\sigma}-\log{\log{\sigma}} < \tfrac{1}{2}\log{\sigma}$, it suffices to enforce the stronger condition

\begin{equation}\label{eq:ineq_str_app}
    t(\log{\log{n}}-\log{t}) \ge \tfrac{1}{2}\log{\sigma}.
\end{equation}

Now, we focus on the largest asymptotic value of the LHS of Eq.~\ref{eq:ineq_str_app}, since comparing against the maximum gives us the largest bound on the alphabet size. To this end, we notice that such a value is attained at $t = O(\log{n}/\log{\log{n}})$, and is equal to $O(\tfrac{\log{n}\,\log{\log{\log{n}}}}{\log{\log{n}}})$. Therefore, $\log{\sigma} = O(\tfrac{\log{n}\log{\log{\log{n}}}}{\log{\log{n}}})$, and thus $\sigma = n^{O\left(\tfrac{\log{\log{\log{n}}}}{\log{\log{n}}}\right)}$, which is only a strict subset of $n^{o(1)}$. Moreover, any $\sigma$ in that range, guarantees that it is always possible to choose a $t = O(\log{n}/\log{\log{n}})$ such that $n/(\log n/t)^t = o((n/\sqrt{\sigma})\log{\sigma})$, and therefore, in this regime, the expected space usage of our encoding would match the lower bound of Thm.~\ref{th:p_sequence_entropy_sigma} with a slower worst-case query time of $O(\log{n}/\log{\log{n}})$.

\subsection{Polynomial Alphabet Sizes ($\sigma = n^{\beta}$)}\label{subsec:poly_alphabet}

When $\sigma = n^{\beta}$ for a fixed constant $\beta \in (0,1]$, we can only use Pătrașcu's FID in the framework of Sec.~\ref{sec:enc_gen_alphabet} when $\beta < 1/2$ (see Appendix~\ref{subsec:polylog_alphabet}). On the other hand, we can still use Gupta et al.'s data structure, thus obtaining an average-case optimal encoding with $o(\log^{2}\log{n})$ worst-case time (see Appendix~\ref{subsec:subpoly_alphabet}). Here we show that, under the current alphabet regime, the worst-case query time can be improved. Let $\mu = \mathbb{E}[r]$, then by Lemma~\ref{lem:average_runs}, $\mu = \sqrt{\tfrac{\pi}{2}}n^{1-\beta/2}(1+o(1))$ and thus $n = \Theta(\mu^{\delta})$ for some fixed constant $\delta \in (1,2]$. Since the universe size is a polynomial in the number of elements, we can use the recently introduced FID of Liang \& Zhou~\cite[Thm.~1]{LiangZ25}, which supports queries in $O(\log{\log{r}})$ time for any $t = O(1)$ with a redundancy of $O(r/(\log^{t}{r}))$ bits. Noting that the redundancy is in turn bounded by $r$, its expected value is then $O(n/\sqrt{\sigma})$ bits, which is $o((n/\sqrt{\sigma})\log{\sigma})$ for any $\sigma = \omega(1)$. Overall, following the framework of Sec.~\ref{sec:enc_gen_alphabet}, the expected space usage of our encoding matches the lower bound of Thm.~\ref{th:p_sequence_entropy_sigma}, with $O(\log{\log{n}})$ worst-case query time.

\end{document}